\newtheorem{theorem}{Theorem} 
\newtheorem{lemma}[theorem]{Lemma}
\newtheorem{example}{Example}%
\newtheorem{remark}{Remark}%
\newtheorem{definition}{Definition}%
\newcommand{\coupledsum}[3]{\sum_{#1}^{#2}{\!}_{\raisebox{-0.3ex}{\scalebox{1.5}{$\scriptscriptstyle#3$}}}}
\newcommand{\coupledprod}[3]{\prod_{#1}^{#2}{\!}_{\raisebox{-0.3ex}{\scalebox{1.5}{$\scriptscriptstyle#3$}}}}
\begin{document}

\title[Coupled Entropy]{On the uniqueness of the coupled entropy}


\author{{Kenric Nelson} \orcidlink{0000-0001-6962-7459}}\email{kenric.nelson@photrek.io}

\affiliation{Photrek, Inc. \\ 
56 Burnham St. $\#1$\\
Watertown, MA 02472, USA}

\date{\today}

\begin{abstract}
The coupled entropy, $H_\kappa,$ is proven to uniquely satisfy the requirement that a generalized entropy be a measure of the uncertainty at the scale, $\sigma,$ for a class of non-exponential distributions. The coupled stretched exponential distributions, including the generalized Pareto and Student's t distributions, are uniquely parameterized to quantify linear uncertainty with the scale and nonlinear uncertainty with the tail shape for a broad class of complex systems. Thereby, the coupled entropy optimizes the representation of the uncertainty due to linear sources. Lemmas for the composability and extensivity of the coupled entropy are proven. The uniqueness of the coupled entropy is further supported by demonstrating consistent thermodynamic relationships, which correspond to a model used for the momentum of high-energy particle collisions. Applications of the coupled entropy in measuring statistical complexity, training variational inference algorithms, and designing communication channels are reviewed.

\end{abstract}

\keywords{Entropy, Information Theory, Thermodynamics, Statistical Mechanics, Complex Systems \\
MSC: 94A17, 60K35, 80A05, 62G32}


\maketitle

\section{Introduction}

Accurate quantification of uncertainty \cite{smith_uncertainty_2013} is foundational to analysis, modeling, and design throughout the sciences. Contemporary science and engineering are increasingly dependent on nonlinear systems that increase the complexity of uncertainty measurement. As biologists apply physical chemistry to increasingly complex systems \cite{marino_methodology_2008, eskov_classification_2019}, the fluctuations in uncertainty complicate these measurements. As computer scientists accelerate the mimicking of biological intelligence, the training and modeling of artificial intelligence can no longer depend on assumptions of linearity and exponential tail decay \cite{almeida_predictive_2002,benderskaya_nonlinear_2013}. The two principal methodologies of uncertainty quantification are statistical measurement \cite{walpole_probability_1978} of the random variable, such as the mean, variance, and higher moments, and statistical measurement of the probability distribution in the form of an entropy function \cite{jaynes_probability_2003}. In this paper, I will demonstrate that there is a precise relationship between the uncertainty measurement of scale, which is the standard deviation for the stretched exponential distributions, and the entropy of a distribution, which for the stretched exponentials is equal to the density at the scale. I will prove that this relationship between the density at the scale and the entropy of the distribution specifies a unique requirement for defining a generalization of the entropy for nonlinear systems that is only fulfilled by the coupled entropy \cite{nelson_average_2017}.

Infodynamics \cite{salthe_what_2001} refers to the overlapping methodologies of information theory \cite{shannon_mathematical_1948, jaynes_information_1957}, thermodynamics \cite{fermi_thermodynamics_2012, lewis_thermodynamics_2020}, and statistical mechanics \cite{ma_statistical_1985, davidson_statistical_2013}. The Boltzmann-Gibbs-Shannon (BGS) entropy is well-established as the appropriate uncertainty metric for equilibrium systems with linear sources of uncertainty \cite{goldstein_gibbs_2019}. Given moment constraints, the BGS entropy is maximized by members of the exponential family of distributions, which E.T. Jaynes showed provides a broad principle for model development \cite{jaynes_information_1957-1} .  Uncertainty quantification of non-equilibrium systems with nonlinear sources of uncertainty presents significantly more difficult challenges. First, the moments depend on both the scale and shape of the distributions, and diverge for moments greater than the inverse of the shape parameter. Secondly, the BGS entropy is quickly dominated by the shape of the distribution, breaking the exponential family connection between the scale and entropy. The distributions defined by Pareto \cite{arnold_pareto_2015} and Gosset \cite{pearson_students_1942} are shown to be of critical importance for defining a generalized entropy function, since they each utilize a definition of scale that is independent of the shape of the distribution, which is induced by the nonlinear sources of uncertainty. 

While the Rényi \cite{renyi_measures_1961}; Sharma, Mohan, and Mitter \cite{sharma_measures_1978, sharma_new_1975, nielsen_closed-form_2012}; and Tsallis \cite{tsallis_possible_1988, tsallis_nonadditive_2009-1} entropies, in particular, and many other generalized entropies \cite{hanel_comprehensive_2011, tempesta_beyond_2016, beck_generalised_2009} have shown promise in modeling the non-exponential distributions characteristic of complex adaptive systems, lingering questions about their derivation from first principles \cite{cho_fresh_2002} have restricted their applicability. The goal of this paper is to demonstrate that by fulfilling a unique requirement regarding the solution of a generalized entropy, applications of such a generalization can be placed on a stronger theoretical foundation.

The paper begins with Preliminaries \ref{sec_prelim} defining the Coupled Exponential Family and the generalized functions that provide a concise pseudo-algebra for the expressions. The prototypical process with multiplicative noise is reviewed to motivate the utility of the coupled exponential definitions. In the Main Results section \ref{secResults}, I prove that the coupled entropy provides a unique solution for the measurement of uncertainty for nonlinear systems. The origin of this solution begins with the identification that the source of nonlinearity in a complex system is equal to the non-exponential tail shape; as such, this property is called the nonlinear statistical coupling, $\kappa,$ \cite{nelson_nonlinear_2010, nelson_average_2017}. Second, a proof is provided that there is only one definition of the scale, $\sigma,$ which is independent of the tail shape and nonlinearity \eqref{subsecInfoScale}. The scale and other parameters of a non-exponential distribution are measured by the probability of $q$ random variables being in the state $x$ \eqref{subsec_IE}. This independent equals distribution, also known as the escort distribution, is the core innovation retained from $q-$statistics; however, $q$ is shown to be a secondary rather than a primary property of complex systems. The core inconsistency of the Tsallis $q-$statistics is shown to be a definition of the scale that is dependent on both linear and nonlinear sources of uncertainty \eqref{subsecEntropies}.  From these foundations, it is proven that the coupled entropy provides a unique balance in quantifying the uncertainty of a family of non-exponential distributions at the scale of a distribution and that this solution restricts the dependence on the nonlinear shape parameter to a generalization of the partition function \eqref{subsecCE}. Lemmas regarding the composability and extensivity of the coupled entropy are proven, which provide a basis for generalized entropy axioms \eqref{subsecAxioms}. 

Following the Results, the Discussion of Applications section \eqref{secDiscuss} describes the significance of the results by a) proving that the shape, coupling parameter $\kappa$ is to first-order a measure of complexity, b) proving that the coupled thermostatistics leads to a of the generalized temperature equal to the scale, $\sigma,$ and thus is independent of the nonlinearity, and c) reviewing how complex infodynamics impacts the design of artificial intelligence and communications systems for complex environments with heavy-tailed phenomena. Just as Goldilocks of fairy tale lore required a perfect temperature for her porridge, so to, scientists, mathematicians, and engineers will only be able to realize the analytical capabilities of a generalized infodynamics if there is perfection in its measurement of uncertainty.

Following the Discussion, the proof regarding the maximization of the coupled entropy by the coupled stretched exponential distributions is detailed \eqref{sec_maxCE}. Following the concluding remarks \eqref{sec_concl}, the Appendix includes a summary of the Generalized entropy functions  \eqref{app_Entropies}, a comparison with the Hanel-Thurner classification \eqref{app_Hanel}, and a description of the Mathematica Github repository \eqref{app_github}.

\section{Preliminaries} \label{sec_prelim}

\subsection{Functions for nonlinear analysis}\label{subsec_CPA}
The coupled exponential family \cite{nelson_definition_2015} provides a unification of many heavy-tailed (and compact-support) distributions while preserving the priority of properly defining the shape and scale of the distributions via the information scale requirement. Use of a generalized exponential function, which was popularized with Tsallis statistics \cite{borges_possible_2004}, while maintaining the informational scale, requires starting with the survival function; otherwise, the measure of dimensions $d$, which impacts the probability distribution function (PDF), is not properly separated. Next, it is important to distinguish between the shape of the distribution near the location $(\alpha: x\rightarrow\mu)$ and the asymptotic shape of the distribution $(\kappa: x\rightarrow\infty)$.   To do so, the exponent of the survival function (SF) is $-\sfrac{1}{\alpha\kappa}$, so that $\alpha$'s influence diminishes as $x\rightarrow\infty$, and the power law tail decay is determined by $\kappa$ alone. Likewise, for finite $\kappa$, as $x\rightarrow 0$ the survival function approximates a stretched exponential with $\alpha$ determining the curvature. 

The definition of the coupled exponential family uses a generalization of the exponential function, $\exp_\kappa^a(x) \equiv \exp_{\kappa/a}(a x)\equiv(1+\kappa x)^\frac{a}{\kappa}$. The inverse of the coupled exponential function is the coupled logarithm, $\frac{1}{a}\ln_\kappa x\equiv \ln_{a\kappa} x^\frac{1}{a}\equiv\frac{1}{a\kappa}(x^\kappa-1)$. For functions limited to the positive range, the symbol $(a)_+=\max(0,a)$ is used. These generalized functions have the following pseudo-algebraic properties \cite{borges_possible_2004,nelson_average_2017}:
\begin{align}
    \exp_\kappa(x+y)&=\exp_\kappa(x) \otimes_\kappa \exp_\kappa(y);\ 
    A \otimes_\kappa B \equiv (A^\kappa + B^\kappa-1)_+^\frac{1}{\kappa} \label{equ_cprod}\\
    \ln_\kappa (xy) &= \ln_\kappa (x) \oplus_\kappa \ln_\kappa (y); \ 
    A \oplus_\kappa B = A + B + \kappa AB. \label{equ_csum}
\end{align}
The coupled subtraction function follows from the coupled sum via the property
\begin{align}
    A\oplus_\kappa(\ominus_\kappa A)=0, \text{ therefore, }
    \ominus_\kappa A = \frac{-A}{1+\kappa A}, 
    \text{ for } A \neq \frac{-1}{\kappa} \\
    A\ominus_\kappa B=A\oplus_\kappa(\ominus_\kappa B)
    = \frac{A-B}{1+\kappa B}, 
    \text{ for } B \neq \frac{-1}{\kappa}.
\end{align}
The coupled division function follows from the coupled product, via the property:
\begin{align}
     A\otimes_\kappa(\oslash_\kappa A)=1, \text{ therefore, }
    \oslash_\kappa A = (2-A^\kappa)_+^\frac{1}{\kappa}, 
    \text{ for } A > 0 \\
    A\oslash_\kappa B=A\otimes_\kappa(\oslash_\kappa B)
    = (A^\kappa - B^\kappa+1)_+^\frac{1}{\kappa} , 
    \text{ for } A,B >0.
\end{align}
The methods extend to summations and products over $N$ variables are:
\begin{align}
\coupledsum{i=1}{N}{\kappa} x_i &= x_1 \oplus_\kappa x_i 
    ... \oplus_\kappa x_N\\
\coupledprod{i=1}{N}{\kappa} &= x_1 \otimes_\kappa x_i 
    ... \otimes_\kappa x_N \\
    &=\exp_\kappa\left(\sum_{i=1}^N\ln_\kappa x_i\right).
\end{align}
Finally, the coupled power is:
\begin{align}
    x^{\oplus_\kappa^N} &= (Nx^\kappa-(n-1))^\frac{1}{\kappa} \\
    &= \coupledprod{i=1}{N}{\kappa} x = x \otimes_\kappa x ... 
    \otimes_\kappa x \\
    &= (2x^\kappa -1) \otimes_\kappa x ... 
    \otimes_\kappa x = (3x^\kappa -2)^\frac{1}{\kappa} ... \oplus_\kappa x 
\end{align}
\subsection{The Coupled Exponential Family}\label{subsecCEF}
The coupled exponential family (CEF) of distributions is designed to generalize the exponential family $(\kappa=0)$ for heavy-tailed $(\kappa>0)$ and compact-support $(-\frac{1}{d}<\kappa<0)$ distributions, while assuring that a) the asymptotic shape of the distribution is defined solely by the coupling $\kappa$, and b) that other parameters in the family have precise interpretations for mathematical physics. The broadest definition of the family comes from a multivariate representation using information geometry. The definition is provided for continuous variables, though it extends to discrete distributions.
\begin{definition}[The Coupled Exponential Family]
    Given a $d-$dimensional random variable $\mathbf{X}_d \sim f_\kappa(\mathbf{x}_d;\boldsymbol{\theta},\alpha)$, within the coupled exponential family, its definition is:
    \begin{align}
        f_\kappa(\mathbf{x}_d;\boldsymbol{\theta},\alpha) &= \frac{h(\mathbf{x})}{Z_\kappa(\boldsymbol{\theta},\alpha,d)}
        \exp_{\alpha\kappa}^{-(1+d\kappa)}
        \left[\boldsymbol{\eta}(\boldsymbol{\theta}) \cdot \mathbf{T}(\mathbf{x})\right] \\
        &= \exp_{\alpha\kappa}^{-(1+d\kappa)}
        \left[\boldsymbol{\eta}(\boldsymbol{\theta}) \cdot \mathbf{T}(\mathbf{x}) \oplus_{\alpha\kappa} \ln_{\alpha\kappa}\left(h(\mathbf{x})\right)^{-\frac{1}{1+d\kappa}}
        \oplus_{\alpha\kappa} \ln_{\alpha\kappa}\left(Z_\kappa(\boldsymbol{\theta},\alpha,d)\right)^\frac{1}{1+d\kappa}
        \right], \label{equ_cefPDF}
    \end{align}
where $\alpha$ is the highest power of the variable function $\mathbf{T}(\mathbf{x})$, and $d$ is the dimension of the variable $\mathbf{x}$. $Z_\kappa$ is the normalization or partition function, and $h(\mathbf{x})$ is the base measure. 
\end{definition}
Notice that this definition of the coupled exponential family preserves the definition of the partition function as the integral of the other terms:
\begin{align}
    Z_\kappa(\boldsymbol{\theta},\alpha,d)= \int_{\mathbf{X}} 
    h(\mathbf{x}) \exp_{\alpha\kappa}^{-(1+d\kappa)}
        \left[\boldsymbol{\eta}(\boldsymbol{\theta}) \cdot \mathbf{T}(\mathbf{x})\right]
        \mathrm{d}\mathbf{x}
\end{align}
This property is assured by bringing $Z$ and possibly $h$ into the coupled exponential function via the coupled sum of the coupled logarithm. This contrasts with the definition for the $q-$exponential family defined by Ohara and Amari \cite{ohara_geometry_2007, amari_geometry_2011}:
\begin{align}
    f_q(\mathbf{x}_d;\boldsymbol{\theta}) &=h(\mathbf{x})
    \exp_q
        \left[\boldsymbol{\eta}(\boldsymbol{\theta}) \cdot \mathbf{T}(\mathbf{x})
        - \ln_qZ_q^{-1}(\boldsymbol{\theta})
        \right]\\
        \exp_q(x) &\equiv (1 + (1-q) x)^\frac{1}{1-q}; \ 
        \ln_q(x) \equiv \frac{1}{1-q}\left(x^{1-q} - 1\right), \label{equ_qexpln}
\end{align}
in which the term $Z_q$ is not the integral of the other terms and thus loses crucial properties regarding the partition function.
\begin{figure*}[ht]
    \centering
    \subfloat[]{
        \includegraphics[width=0.45\linewidth,page=1]{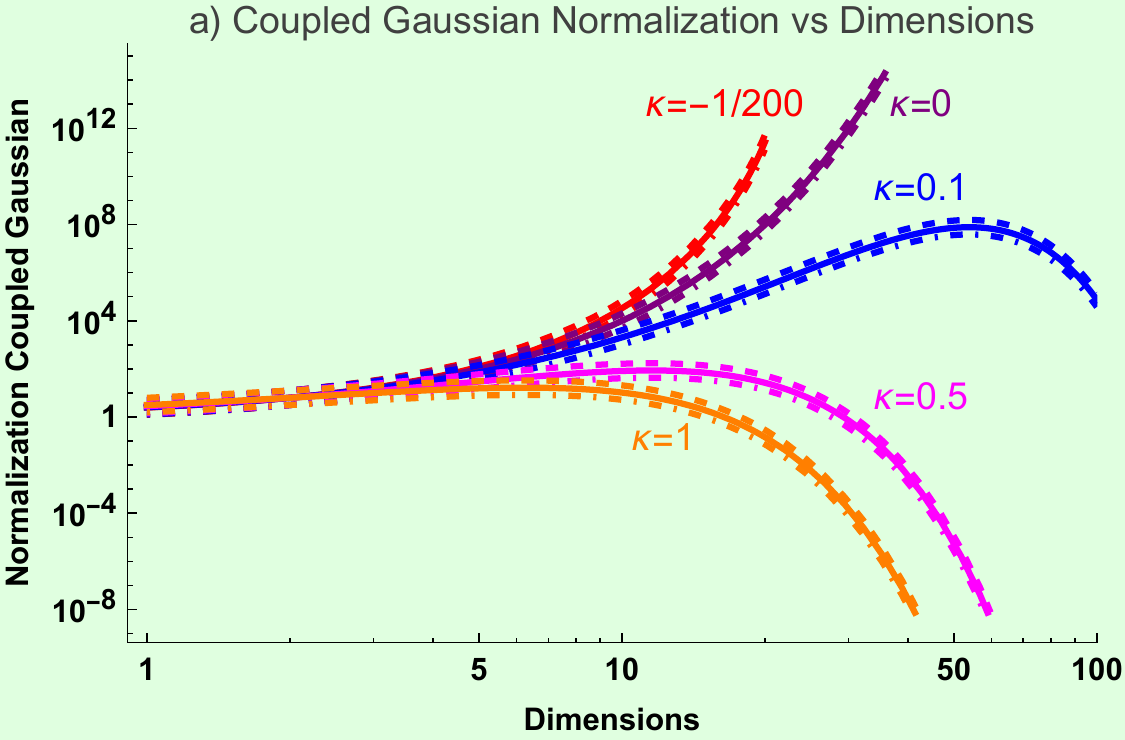}
        \label{fig_CG_Norm_a}
    }%
    \hfill
    \subfloat[]{
        \includegraphics[width=0.45\linewidth,page=1]{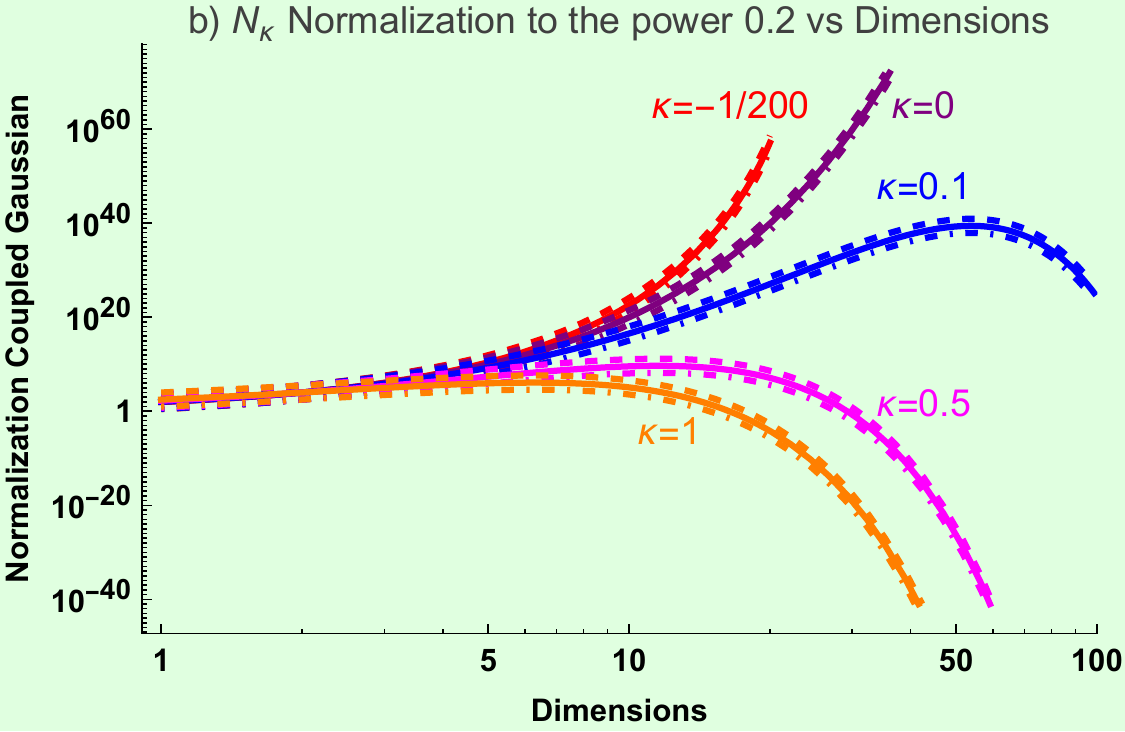}
        \label{fig_CG_Norm_b}
    }
    \caption{The Coupled Gaussian normalization as a function of dimensions, a) without modification b) raised to the power 1/5.}
    \label{fig_CG_Norm}
\end{figure*}

\begin{example}[Members of the CEF]
    Two broad groups of the CEF are when $h(\mathbf{x})=1$ (the coupled stretched exponential distribution) and when the survival function is only $\exp_{\alpha\kappa}^{-1}
        \left[\boldsymbol{\eta}(\boldsymbol{\theta}) \cdot \mathbf{T}(\mathbf{x})\right]$ (the coupled Weibull distribution). To simplify the expression, a radial variable with the stretching parameter $\sfrac{\alpha}{2}$ applied to elements of the vectors and matrix is defined as $r^\alpha\equiv ((\mathbf{x}-\boldsymbol{\mu})^{\circ \sfrac{\alpha}{2} \top} 
(\boldsymbol{\Sigma}^{\circ \sfrac{\alpha}{2}})^{-1}(\mathbf{x}-\boldsymbol{\mu})^{\circ \sfrac{\alpha}{2}}\geq 0$.

        \textbf{I. Coupled Stretched Exponential Distribution}
        PDF:
        \begin{align}
f_\kappa^{\text{Stretch}}(\mathbf{x}) &\equiv \frac{1}{Z_\kappa}\left(1+\kappa
r^\alpha\right)
^{-\frac{1+d\kappa}{\alpha\kappa}} \nonumber \\ \label{equ_csePDF}
&\equiv \frac{1}{Z_\kappa}\exp_{\alpha\kappa}^{-(1+d\kappa)}
\left(\frac{r^\alpha}{\alpha}\right) \nonumber \\
&\equiv \frac{1}{Z_\kappa}\exp_\kappa^{-\frac{1+d\kappa}{\alpha}}
\left(r^\alpha\right) \\
Z_\kappa&=\frac{2\pi^\frac{d}{2}}{\Gamma\left(\frac{d}{2}\right)}
|\boldsymbol{\Sigma}|^\frac{1}{2}
\left\{
\begin{matrix}
    \frac{1}{\alpha}\kappa^{-\frac{d}{\alpha}} B\left(\frac{d}{\alpha},\frac{1}{\alpha\kappa}\right)& \kappa > 0 \\
    \alpha^{\left(\frac{d}{\alpha}-1\right)}\Gamma\left(\frac{d}{\alpha}\right) & \kappa=0 \\
    \frac{1}{\alpha}(-\kappa)^{-\frac{d}{\alpha}} B\left(\frac{d}{\alpha},1-\frac{1+d\kappa}{\alpha\kappa}\right)& -\frac{1}{d}<\kappa<0 .
\end{matrix}
\right. \label{equ_cseNorm}\\
\nonumber \end{align}
        SF = 1 - CDF:
        \begin{align}
            S_\kappa^{\mathrm{Stretch}}(\mathbf{x})=1-F_\kappa(\mathbf{x}) 
            &\equiv \left\{
    \begin{matrix}
    \left(1-I_z\left(\frac{d}{\alpha},\frac{1}{\alpha\kappa}\right)\right)
     & \kappa > 0 \\
      \frac{\Gamma\left(\frac{d}{\alpha},\frac{r^\alpha}{\alpha}\right)}{\Gamma\left(\frac{d}{\alpha}\right)} & \kappa = 0 \\
     \left(1-I_z\left(\frac{d}{\alpha},1-\frac{1+d\kappa}{\alpha\kappa}\right)\right)
     & -1<\kappa< 0 
    \end{matrix}\right. \label{equ_cseSF} \\ 
    z&=\left\{
    \begin{matrix}
        \frac{\kappa r^\alpha}{1+\kappa r^\alpha} & \kappa > 0 \nonumber \\
        -\kappa r^\alpha & -\frac{1}{d}<\kappa<0
    \end{matrix}\right. \nonumber \\
    I_z&=\frac{B_z(a,b)}{B(a,b)};\text{  Regularized Incomplete Beta Function} \nonumber
        \end{align}
        Special Cases:
        $\alpha=1$: Coupled Exponential Distribution
        $\alpha=2$: Coupled Gaussian Distribution

        \textbf{II. Coupled Weibull Distribution}
        Assuming radial symmetry and starting with the survival function, which has just the coupled exponential structure.
        SF = 1 - CDF:
        \begin{align}
            S_\kappa^{\mathrm{Weibull}}(\mathbf{x})=1-F_\kappa(\mathbf{x}) \equiv
            \left\{\begin{matrix}
                \left(1+\kappa r^\alpha\right)_+^{-\frac{1}{\alpha\kappa}}
                & \kappa > -\frac{1}{d}; \ \kappa \neq 0\\ 
                \exp\left(-r^\alpha\right) & \kappa = 0
            \end{matrix}\right. \label{equ_cwSF}
        \end{align} 

         PDF: 
        \begin{align}
            f_\kappa^{\mathrm{Weibull}}(\mathbf{x}) 
            &\equiv \left\{\begin{matrix}
                \frac{1}{Z_\kappa}r^{\circ \frac{\alpha-1}{2}}
                \left(1+\kappa r^\alpha\right)_+^{-\frac{1}{\alpha\kappa}-d}
                & \kappa > -\frac{1}{d}; \ \kappa \neq 0\\
                \frac{1}{Z_\kappa}r^{\circ \frac{\alpha-1}{2}}
                \exp\left(-r^\alpha\right) & \kappa = 0
            \end{matrix}\right. \label{equ_cwPDF} \\ 
            Z_\kappa&=2\pi^\frac{d}{2}|\boldsymbol{\Sigma}|^\frac{1}{2}
\left\{
\begin{matrix}
    \left(\alpha\kappa^\frac{d}{\alpha}\Gamma(\frac{d}{2})\right)^{-1}
    B\left(\frac{d}{\alpha},\frac{1}{\alpha\kappa}+1 - \frac{d}{\alpha}\right)& \kappa > 0 \\
    \left(\alpha\Gamma(\frac{d}{2})\right)^{-1} \Gamma\left(\frac{d}{\alpha}\right) & \kappa=0 \\
    \left(\alpha(-\kappa)^\frac{d}{\alpha}\Gamma(\frac{d}{2})\right)^{-1} B\left(\frac{d}{\alpha},-\frac{1}{\alpha\kappa}-\frac{d}{\alpha}+1\right)& -\frac{1}{d}<\kappa<0 .
\end{matrix}
\right. \nonumber \\
\nonumber
        \end{align}
        Special Cases:
        $\alpha=1$: Coupled Exponential Distribution
        $\alpha=2$: Coupled Rayleigh Distribution

        The one-dimensional coupled exponential distribution is unique in retaining just the coupled exponential structure for both SF and the PDF.
\end{example}

\subsection{Multiplicative Noise Fluctuations}\label{subsec_noise}
The fluctuations of multiplicative noise provide further evidence that the structure of the coupled stretched exponentials provides a unique model of the mathematical physics. Beck \cite{beck_superstatistics_2003} showed that the gamma distribution model of fluctuation of the scale of an exponential distribution is distributed as a $q-$exponential distribution and that several other fluctuation distributions have a $q-$exponential as a second-order approximation. This model can be extended to the full stretched exponentials, with $\beta=\sigma^{-\alpha}$, in which case,  $\sigma^{-\alpha}$ is the mean and $\kappa=\frac{\left<\sigma^{-2\alpha}\right>-\left<\sigma^{-\alpha}\right>^2}{\left<\sigma^{-\alpha}\right>^2}$ is the relative variance of the fluctuations \cite{nelson_average_2017}. 

Likewise, a multiplicative noise process with a parabolic potential relating the deterministic and multiplicative functions has a non-equilibrium stationary state (NESS) of a coupled Gaussian, $X_{NESS} \sim N_\kappa(X_0,\sigma),$ \cite{al-najafi_independent_2024, anteneodo_multiplicative_2003}. The Stratonovich stochastic equation in terms of the NESS scale $\sigma$ and coupling $\kappa$ is: 
\begin{equation}\label{equ_multproc}
    dX_t=-X_tdt+ \sqrt{2}\sigma \circ dW_t^{(a)}+ \sqrt{2\kappa} X_t^2 \circ dW_t^{(m)}.
\end{equation}
Figure \ref{fig:MultProc} shows several samples of the process for $\sigma=5$ and a set of couplings $\kappa= (0.1, 1, 10)$. While the coupled Gaussian scale is independent of the coupling, which creates fluctuations in the process, a $q$-Gaussian model has a scale of $\sqrt{\beta^{-1}}=\frac{\sigma}{\sqrt{1+\kappa}}$. The inverse dependence on the coupling undermines the ability of $\sqrt{\beta^{-1}}$ to model a generalized scale and physical properties, such as a generalized temperature. Antendeodo \cite{anteneodo_multiplicative_2003} also defined the multiplicative noise process with a stationary distribution that generalizes the stretched exponential distribution. The mapping to the CESD is not as direct, so additional research in defining those processes is recommended.
\begin{figure}[ht]
    \centering
    \includegraphics[width=1\linewidth,page=1]{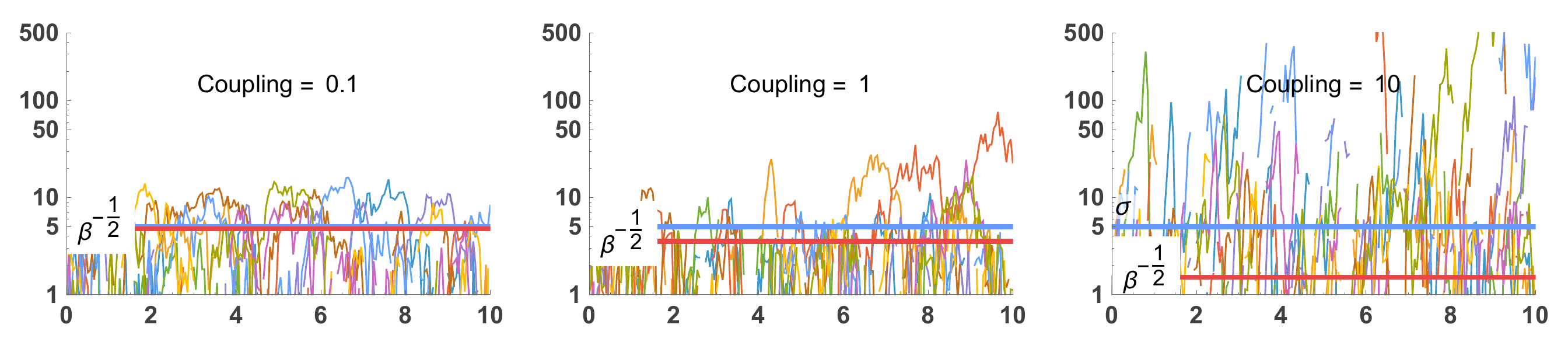}
\caption{\textbf{Multiplicative Process Samples} 10 samples from the multiplicative noise process defined by equation \eqref{equ_multproc}, showing the positive values on a logarithmic scale. The NESS distribution is a coupled Gaussian with $\sigma=5$ and a set of couplings $\kappa=(0.1,1,10)$. As the coupling increases, the fluctuations of the process increase, while the scale $\sigma=5$ is independent of the fluctuations. In contrast, the $q$-Gaussian scale $\sqrt{\beta^{-1}}$  is dependent on the multiplicative noise, which undermines its ability to be a measure of generalized temperature. }
    \label{fig:MultProc}
\end{figure}

\section{Main Results}
\label{secResults}

\subsection{Uniqueness of the Informational Scale}\label{subsecInfoScale}

Specifying the scale of a shape-scale distribution would appear to be an elementary task, but, in fact, the lack of a clear criteria has resulted in different statistical physics communities using different criteria. For instance, Tsallis statistics \cite{shalizi_cosma_r_tsallis_2021} uses a scale derived from the hypothesized Tsallis entropy, while the space plasma \cite{pierrard_kappa_2010} community specifies the average velocity as a scale. I'll prove that the only scale that separates the linear and nonlinear sources of uncertainty is the value at which the derivative of the log-log of the distribution is negative one. Given the connection to the surprisal of the distribution, I'll refer to this as the \textit{informational scale}. 

Recall that a shape-scale type distribution can be scaled if given a standard (non-scaled) random variable, $X \sim f(x),$ the scaled random variable is $\sigma X \sim \frac{1}{\sigma}f(\frac{x}{\sigma})$. While this is a criterion for the distribution, it does not specify the definition of the scale, since $\sigma'=a\sigma+b$ also satisfies the relationship. The information scale requirement will be specified in reference to the coupled exponential family, which is defined in the Methods section \ref{subsecCEF}.

\begin{definition}[Informational Scale]
    The informational scale, $\sigma$, of a distribution with shape parameters $(\alpha,\kappa)$ and location, $\mu$, is the value of $x$ such that:
    \begin{align}\label{equ_info_scale}
        \frac{\mathrm{d}\ln f(x-\mu;\sigma,\kappa,\alpha)}{\mathrm{d}\ln x}\vert_{x=\mu+\sigma}=x \frac{f'(x-\mu;\sigma,\kappa,\alpha)}{f(x-\mu;\sigma,\kappa,\alpha)}\vert_{x=\mu+\sigma}=-1
    \end{align}
    Equivalently, using the surprisal rather than the derivative of the log-log, the criterion is:
    \begin{align}
        \frac{\mathrm{d}\ln f(x-\mu;\sigma,\kappa,\alpha)}{\mathrm{d} x}\vert_{x=\mu+\sigma}=-\frac{1}{\sigma}
    \end{align}
\end{definition}

I will show that the parameterization for the generalized Pareto (P) \cite{arnold_pareto_2015} and the Student's t (S) \cite{pearson_students_1942} distributions satisfy the informational scale, while the Tsallis $q$-exponential (Texp) and $q$-Gaussian (TGauss), as well as the space plasma kappa $(\kappa_P)$ distribution, do not. The Tsallis and space plasma definitions seek to generalize thermodynamic principles, where the $q$-statistics inverse-scale, $\beta$, is proportional to a generalized inverse-temperature, and the plasma scale of $\frac{2k_B T}{m}$, reflects an energy due to velocity ($k_B:$ Boltzmann constant;  $T:$ temperature;  $m:$ mass). Without loss of generality, the location is set to zero.

\begin{example}[Informational \lowercase{scale of common heavy-tailed distributions}]
    \begin{align*}
        \text{Distribution Definitions}\\
        f_P(x;\sigma,\kappa)&=\frac{1}{\sigma}\left(1+\kappa\frac{x}{\sigma}\right)^{-\frac{1}{\kappa}-1}\\
        f_{Texp}(x;\beta,q)&=(2-q)\beta\left(1+(q-1)\beta x\right)^{\frac{1}{1-q}}\\
        f_S(x;\sigma,\nu)&=\frac{1}{Z_S}\left(1+\frac{x^2}{\nu\sigma^2}\right)^{-\frac{1}{\alpha}(\nu+1)}\\
        f_{TGauss}(x;\beta,q)&=\frac{1}{Z_{TG}}\left(1+(q-1)\beta x^2\right)^{\frac{1}{1-q}}\\
        f_{\kappa_P}(x;\frac{2k_B T}{m},\kappa_P)&=\frac{1}{Z_{\kappa_P}} \left[ 1 + \frac{m x^2}{2k_BT (\kappa - 3/2)} \right]^{-\kappa - 1}\\
    \end{align*}
    \begin{align*}
        \text{Solutions for }&\text{Informational Scale } (\sigma_I^{dist})\\
        \sigma_I^P&=\sigma\\
        \sigma_I^{Texp} &=\frac{1}{(2-q)\beta}\\
        \sigma_I^S &=\sigma\\
        \sigma_I^{TGauss}&=\frac{1}{\sqrt{(3-q)\beta}}\\
        \sigma_I^{\kappa_P} &=\frac{2k_BT}{m}\frac{\kappa_P-\sfrac{3}{2}}{\sqrt{2\kappa_P+1}}
    \end{align*}
\end{example}

Throughout the paper, several properties dependent on the information scale will be proven. First, the informational content gained by clearly separating the scale and shape of non-exponential distributions is shown by the Boltzmann-Gibbs-Shannon (BGS) entropy, $H(f(x))=-\int_{x\in X}f(x)\ln f(x) \mathrm{d}x$, of the example distributions.
\begin{align}
    H(f_P(x))&=1+\ln\sigma +\kappa\\ \label{equ_ent_GPD}
    H(f_{Texp}(x))&=1+\ln\frac{1}{(2-q)\beta} +\frac{q-1}{2-q}\\
    H(f_S(x))&=\ln Z_S +\frac{\nu + 1}{2}\left(\psi(\frac{\nu+1}{2})-\psi(\frac{\nu}{2})\right);\ \psi \text{ is the digamma function}\\
    H(f_{TGauss}(x))&=1+\ln Z_{TG} +\frac{1}{q-1}\left(\psi(\frac{1}{q-1})-\psi(\frac{q-1}{2(3-q)})\right)\\
    H(f_{\kappa_P}(x))&=1+\ln Z_{TG} +(\kappa_P+1)\left(\psi(\kappa_P+1)-\psi(\kappa_P+\sfrac{1}{2})\right)\\
\end{align}

The structure of the entropy for the shape-scale distributions is a constant plus the logarithm of the partition function (normalization) plus a function of the shape. The entropy of the GPD equation \eqref{equ_ent_GPD} has a direct simplicity. In Shannon's classic paper on the theory of communication \cite{shannon_mathematical_1948}, he states after reviewing the entropy axioms, "The real justification for these definitions will reside in their implications." In this vein, the uniqueness proof for the Coupled Entropy will begin with a pragmatic derivation and a proof that its solution for a broad range of shape-scale distributions, $f(\mathbf{x};\sigma,\kappa,\alpha,d),$ has a simple structure, whose dependence on the asymptotic shape, $\kappa$, is contained within a generalized logarithm, $\ln_g x$, of the partition function, Z:
\begin{equation}
H_\kappa(f(\mathbf{x};\sigma,\kappa,\alpha,d),\alpha,d)
=g_1(\alpha,d)+\ln_{g(\kappa,\alpha,d)}Z(\sigma,\kappa,\alpha,d);
\end{equation}
With such a solution, the measurement of uncertainty is focused on the partition function, much like the exponential family. Furthermore, I'll prove that this solution is associated with the density at the location plus the scale.

For clarity, the proof of a unique scale is completed for the one-dimensional case.  The two most important distributions are the coupled exponential and the coupled Gaussian.

\begin{definition}[Coupled Exponential Distribution]\label{def_CED}
For location $\mu$, scale $\sigma$, shape $\alpha=1$,  and $d=1,$ the survival function for the one-sided coupled exponential distribution is:
\begin{align} 
\text{SF: } S_\kappa^\text{exp}(x;\mu,\sigma,1)
\equiv \left(1 + \kappa\left(\frac{x-\mu}{\sigma}\right)\right)_+^{-\frac{1}{\kappa}}
\equiv\exp_{\kappa}^{-1}
\left(\frac{x-\mu}{\sigma}\right); \\ 
x>\mu; \ \kappa > -1 \nonumber
\end{align}

The probability density function, $-\frac{\mathrm{d}S}{\mathrm{d}x},$ for the one-sided coupled exponential distribution is:
\begin{align} 
\text{PDF: }f_\kappa^{\text{exp}}(x;\mu,\sigma,1) &\equiv \frac{1}{\sigma}\left(1+\kappa\frac{x-\mu}{\sigma}\right)_+^{-\frac{1+\kappa}{\kappa}}\label{equ_cexpdist} 
\equiv \frac{1}{\sigma}\exp_\kappa^{-(1+\kappa)}\left(\frac{x-\mu}{\sigma}\right), \\  \label{equ_cepdf}
\text{for } x\geq \mu, \kappa>-1.\nonumber
\end{align}
\end{definition}

\begin{definition}[Coupled Gaussian Distribution]\label{def_CGD}
For location $\mu$, scale $\sigma$, shape $\alpha=2$, and $d=1,$ the survival function for the coupled Gaussian distribution is:
 \begin{align}
            S_\kappa(x;\mu,\sigma,2)=1-F_\kappa(x) 
            &\equiv \left\{
    \begin{matrix}
    \left(1-I_z\left(\frac{1}{2},\frac{1}{2\kappa}\right)\right)
     & \kappa > 0 \\
      \frac{1}{2}\left(1-\text{erf}\left(\frac{x-\mu}{\sqrt{2}\sigma}\right)\right) & \kappa = 0 \\
     \left(1-I_z\left(\frac{1}{2},\frac{-1+\kappa}{2\kappa}\right)\right)
     & -1<\kappa< 0 
    \end{matrix}\right.\\
    z&=\left\{
    \begin{matrix}
        \frac{\kappa \left(\frac{x-\mu}{\sigma}\right)^{2}}{1+\kappa \left(\frac{x-\mu}{\sigma}\right)^{2}} & \kappa > 0 \nonumber \\
        -\kappa \left(\frac{x-\mu}{\sigma}\right)^{2} & -1<\kappa<0
    \end{matrix}\right. \nonumber \\
    I_z&=\frac{B_z(a,b)}{B(a,b)};\text{  Regularized Incomplete Beta Function} \nonumber
        \end{align}

The probability density function, $-\frac{\mathrm{d}S}{\mathrm{d}x},$  for the coupled Gaussian distribution is:
\begin{align} 
f_\kappa^{\text{exp}}(x;\mu,\sigma,2) &\equiv \frac{1}{Z}\left(1+\kappa\left(\frac{x-\mu}{\sigma}\right)^2\right)_+^{-\frac{1+\kappa}{2\kappa}}\label{equ_CGdist} \\
&\equiv \frac{1}{Z}\exp_{2\kappa}^{-(1+\kappa)}\left(\frac{1}{2}\left(\frac{x-\mu}{\sigma}\right)^2\right), 
\ \text{for } \kappa>-1.\\ \label{equ_cgpdf}
\end{align}
\end{definition}

\begin{lemma}[Independent Properties of the CESD Parameters]\label{lem_CESD}
   Given a random variable X distributed as the CSED pdf equation \eqref{equ_cexpdist}, then 
   \begin{enumerate}[label=\bfseries\Roman*.]
       \item the coupling parameter, $\kappa$, is the only shape parameter of the distribution when $|x-\mu|\gg\sigma$ and the asymptotic distribution is the Type I Pareto or a pure power law;
       \item the location shape parameter, $\alpha$, is the only shape parameter of the distribution when $|x-\mu|\ll\sigma$ and the distribution converges to a stretched exponential distribution;
       \item the scale parameter, $\sigma$, is exclusively the informational scale as defined by equation \eqref{equ_info_scale}.
   \end{enumerate}
\end{lemma}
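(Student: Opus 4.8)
The plan is to work throughout with the one-dimensional coupled stretched exponential density (setting $\mu=0$ and $d=1$ without loss of generality, as the paper does), which from \eqref{equ_csePDF} reads $f_\kappa(x;\sigma,\alpha)=Z_\kappa^{-1}\bigl(1+\kappa(x/\sigma)^\alpha\bigr)^{-\frac{1+\kappa}{\alpha\kappa}}$, and to base all three parts on a single computation, the log-log slope
\begin{equation}\label{equ_loglogslope}
\frac{\mathrm d\ln f}{\mathrm d\ln x}=x\frac{f'}{f}=-\frac{(1+\kappa)(x/\sigma)^\alpha}{1+\kappa(x/\sigma)^\alpha}.
\end{equation}
Each claim then reduces to reading off a limit or a root of this one rational expression, so the obstacle is less computational than conceptual: it is pinning down precisely what ``the only shape parameter'' means, namely the exponent that survives once the affine rescaling freedom $x\mapsto ax$ (which can reabsorb a multiplicative prefactor and a rescaling of $\sigma$) has been quotiented out, as opposed to a parameter that merely sets a scale or normalization.

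For part I, I would take $(x/\sigma)^\alpha\to\infty$ (the heavy-tail regime, $\kappa>0$), where \eqref{equ_loglogslope} tends to the constant $-(1+\kappa)/\kappa=-(1+1/\kappa)$. A constant log-log slope is the signature of a pure power law, so $f\sim x^{-(1+1/\kappa)}$, i.e.\ a Type~I Pareto tail with index $1/\kappa$. Since $\alpha$ has dropped out of the exponent entirely, appearing only in the multiplicative prefactor rather than in the $x$-dependence, $\kappa$ is the unique surviving asymptotic shape parameter, establishing the claim.

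For part II, I would instead take $\kappa(x/\sigma)^\alpha\to 0$ (the regime $|x|\ll\sigma$). Using $\lim_{u\to0}(1+\kappa u)^{-\frac{1+\kappa}{\alpha\kappa}}=\exp\!\bigl(-\tfrac{1+\kappa}{\alpha}u\bigr)$ with $u=(x/\sigma)^\alpha$, the density collapses to the stretched exponential $f\approx Z_\kappa^{-1}\exp\!\bigl(-\tfrac{1+\kappa}{\alpha}(x/\sigma)^\alpha\bigr)$. Here $\kappa$ enters only through the coefficient $\tfrac{1+\kappa}{\alpha}$, which multiplies $(x/\sigma)^\alpha$ and is therefore reabsorbable into an effective scale $\sigma'=\sigma(\alpha/(1+\kappa))^{1/\alpha}$; the stretching exponent, which is the genuine shape near the location, is $\alpha$ alone, exactly as anticipated in the Preliminaries.

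For part III, I would impose the informational-scale condition \eqref{equ_info_scale} by setting \eqref{equ_loglogslope} equal to $-1$. Writing $u=(x/\sigma)^\alpha$, the equation $(1+\kappa)u=1+\kappa u$ simplifies to $u=1$, whose unique positive root is $x=\sigma$, valid for every $\alpha$ and every $\kappa>-1$. Because the root is independent of both shape parameters, the scale appearing in the density is exactly, and by uniqueness of the root exclusively, the informational scale. The only care needed is to confirm that the log-log and surprisal forms of \eqref{equ_info_scale} coincide at $x=\mu+\sigma$, which uses $\mu=0$; I expect this bookkeeping, together with the precise articulation of ``shape parameter'' flagged in the first paragraph, to be the most delicate point rather than any of the elementary limits themselves.
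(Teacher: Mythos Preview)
Your proposal is correct and follows essentially the same route as the paper: Part~III is exactly the paper's computation of the log-log slope and its root at $x=\sigma$, and Parts~I and~II are the same limiting arguments, with the only cosmetic difference that the paper reads the power-law exponent directly from the asymptotic density rather than from the constant limit of the slope. Your organization around the single formula \eqref{equ_loglogslope} is a tidy unification, and your treatment of the residual $(1+\kappa)/\alpha$ coefficient in Part~II---absorbing it into an effective scale---is slightly more careful than the paper, which simply writes down the $\kappa=0$ stretched exponential as the limiting form.
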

\begin{proof}
    $ $\\
   \begin{enumerate}[label=\bfseries\Roman*.]
   \item 
   \begin{align*}
        \lim_{|x|\gg\mu+\sigma}\frac{1}{Z}\left(1+\kappa\left|\frac{x-\mu}        {\sigma}\right|^\alpha\right)^{-\frac{1+\kappa}{\alpha\kappa}}
        &=\frac{1}{Z}\left(\kappa\left|\frac{x-\mu}{\sigma}\right|^\alpha\right)^{-\frac{1+\kappa}{\alpha\kappa}}\\
        &=\frac{\alpha\kappa^{-\frac{1}{\alpha\kappa}}}
        {\sigma B\left(\frac{1}{\alpha},\frac{1}{\alpha\kappa}\right)}
       \left|\frac{x-\mu}{\sigma}\right|^{-\frac{1+\kappa}{\kappa}}
    \end{align*}
     Therefore, for $|x-\mu|\gg\sigma$  is a pure power law with exponent, $-\frac{1}{\kappa}-1$. For a $d-$dimensional distribution, the $-1$ factor is $-d$. Therefore, the asymptotic shape of the CESD is only determined by $\kappa$.
    \item For $|x-\mu|\ll\sigma$ and a fixed $\kappa$, $\kappa\left|\frac{x-\mu}{\sigma}\right|\ll 1$. Therefore  $\kappa\ll \left|\frac{\sigma}{x-\mu}\right|$, which is the same requirement for the convergence of the generalized exponential to the exponential function.  Therefore,
    \begin{align*}
        f_\kappa^{\text{Stretch}}(x) \approx \frac{\alpha}{\sigma \Gamma\left(\frac{1}{\alpha}\right)}
        \exp\left(-\frac{1}{\alpha}\left|\frac{x-\mu}{\sigma}\right|^\alpha\right) 
        \text{ when  } |x-\mu|\ll\sigma.
    \end{align*}
    \item Without loss of generality, assuming $\mu=0$,
    \begin{align*}
        -\frac{\mathrm{d}\ln\left[\frac{1}{Z}\left(1+\kappa\left|\frac{x}{\sigma}\right|
        ^\alpha\right)^{-\frac{1+\kappa}{\alpha\kappa}}\right]}{\mathrm{d}\ln(x)} 
        &= -x \frac{\partial_x\left[\frac{1}{Z}\left(1+\kappa\left|\frac{x}{\sigma}\right|
        ^\alpha\right)^{-\frac{1+\kappa}{\alpha\kappa}}\right]}{\left[\frac{1}{Z}\left(1+\kappa\left|\frac{x}{\sigma}\right|
        ^\alpha\right)^{-\frac{1+\kappa}{\alpha\kappa}}\right]}\\
        &=\frac{(1+\kappa)\left|\frac{x}{\sigma}\right|^\alpha}{1+\kappa\left|\frac{x}{\sigma}\right|^\alpha},
    \end{align*}
    which for $x=\pm\sigma$ is equal to 1. Therefore, $\sigma$ is the informational scale.
   \end{enumerate}

\end{proof}

The contrast between the independence of the scale of the coupled exponential and the dependence of the $q-$exponential distribution on the shape is shown in Figure \ref{fig_scale}.
\begin{figure*}[ht] 
    \centering
    \subfloat{
        \includegraphics[width=0.475\linewidth,page=1]{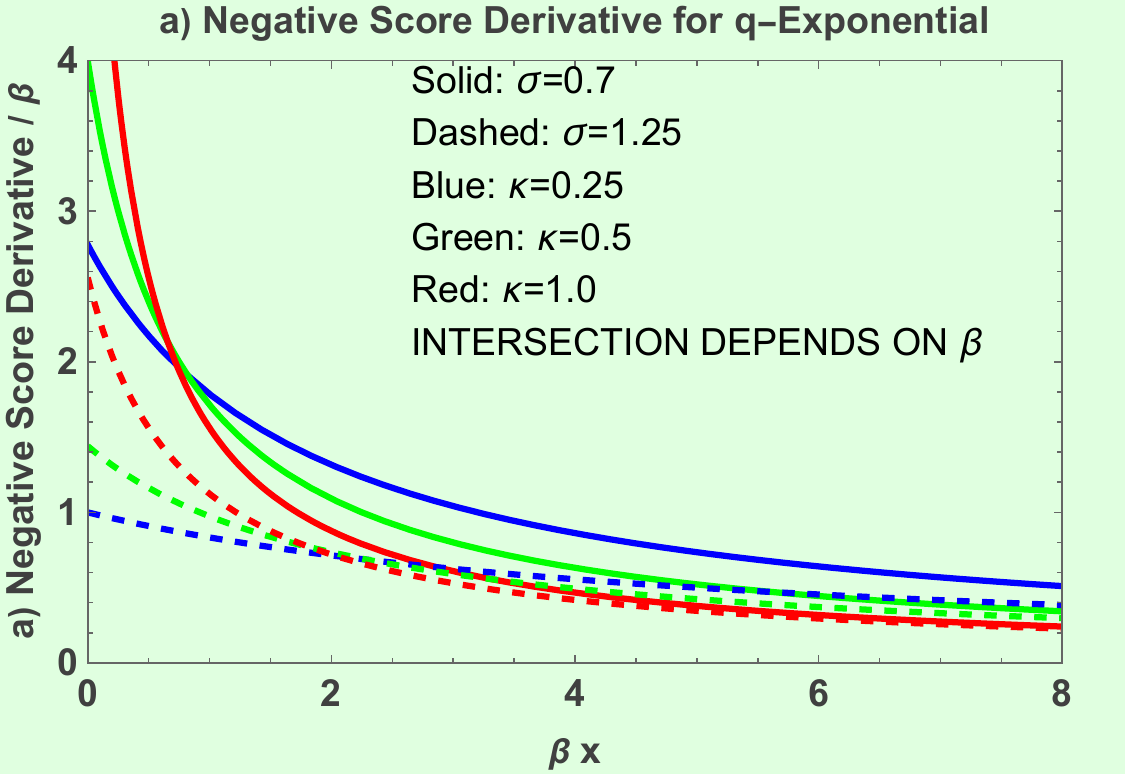}
        \label{fig_scale_a}
        }
    \hfill
    \subfloat[]{
        \includegraphics[width=0.475\linewidth,page=1]{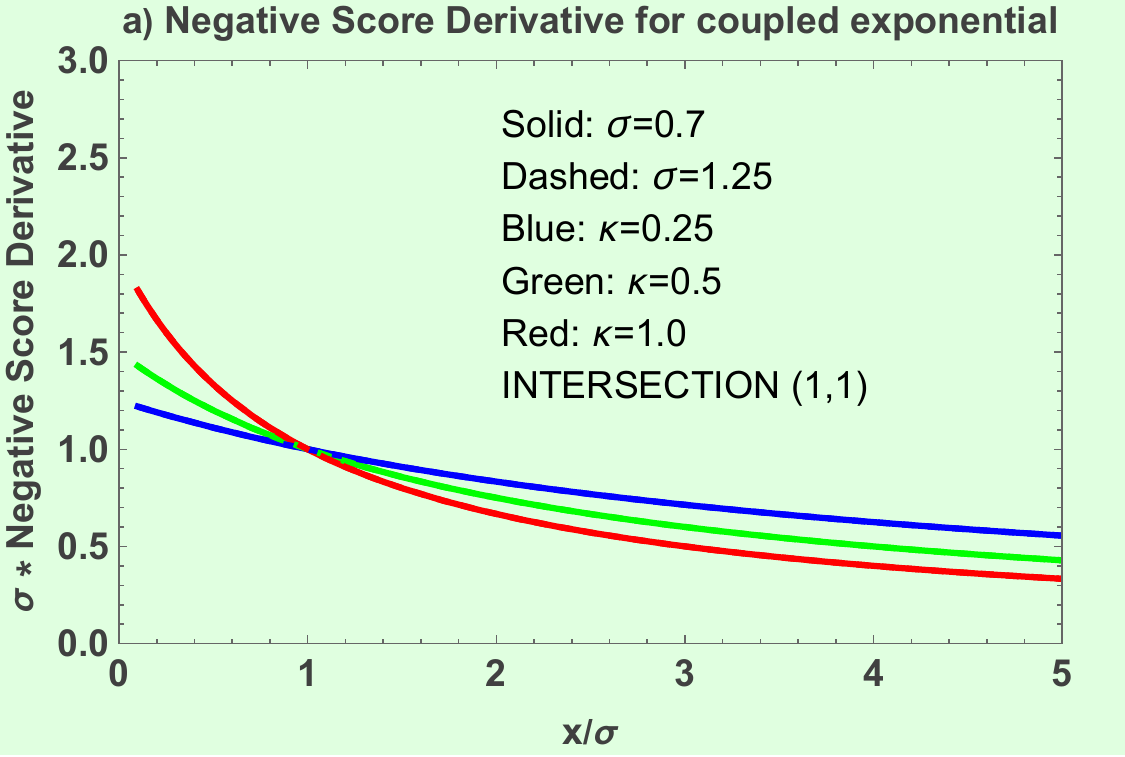}
        \label{fig_scale_b}
        }
    \caption{The negative derivative of the score function (logarithm of the distribution) shows the uniqueness of the information scale, $\sigma.$ a) The inverse-scale $(\beta)$  of the $q-$exponential does not have a common intersection. b) The information scale $(\sigma),$  when normalized, has a common intersection independent of the shape $\kappa.$ The $-\sigma \ \mathrm{d}\ln(f(x))/\mathrm{d}x$ function is not dependent on $\sigma.$ } 
    \label{fig_scale}
\end{figure*}

\subsection{Required Solution for a Generalized Entropy}\label{subsec_required}

Prior to any definition for a generalized entropy, the coupled exponential family provides guidance regarding the required solution. This comes from understanding the structure of the entropy for the stretched exponential members of the exponential family with $h(x)=1$ and considering $d=1$. In this case, the BGS entropy is:
\begin{align}
    &H(f((x-\mu)^\alpha/\alpha;\sigma)) \nonumber \\ 
    &=-\int_{x\in\mathcal{X}} f((x-\mu)^\alpha/\alpha;\sigma)\ln[
    \exp(-\frac{(x-\mu)^\alpha}{\alpha \sigma^\alpha}-\ln Z(\sigma,\alpha)
    ]dF(x) \\
    &=\frac{1}{\alpha}+\ln Z(\sigma,\alpha). \nonumber
\end{align}
An infrequently recognized but important property of the entropy is that this is the negative logarithm of the density at the location plus the scale (at the scale, for short), 
\begin{align}
    H[f((x-\mu)^\alpha/\alpha;\sigma)]
    &= -\ln[f(\mu+\sigma; \mu,\sigma,\alpha)] \nonumber \\
    &=\alpha^{-1}+\ln Z(\sigma,\alpha)
\end{align}
 A generalized entropy is required to preserve the structure of this solution. That is, the generalized entropy of a coupled stretched exponential distribution must be equal to the density at the scale transformed to the entropy domain with the inverse of the generalized exponential. Given Lemma \ref{lem_CESD} establishing the requirement for the coupled exponential family so that the scale is independent of the shape, the required solution for what will be called the coupled entropy is established. For simplicity, the solution is shown for one dimension here; however, the $d-$dimensional solution will be derived in Section \ref{subsecCE}.
\begin{definition}[Required Generalized Entropy Solution] \label{def_required} 
\begin{enumerate}
    \item Given a 1-D coupled stretched exponential distribution, \eqref{equ_csePDF}, 
    \begin{equation}
        f_\kappa(\mu+\sigma;\mu,\sigma,\alpha)
        = \exp_{\alpha\kappa}^{-(1+\kappa)}
        \left(\frac{1}{\alpha} \oplus_{\alpha\kappa} 
        \ln_{\alpha\kappa}
        Z_\kappa(\sigma,\alpha)^\frac{1}{1+\kappa}
        \right)
    \end{equation}
    then, a generalized entropy is required to have the following solution:
    \begin{align}
    H_\kappa^{Required}(f_\kappa(x;\mu,\sigma,\alpha),\alpha) 
    &= \ln_{\alpha\kappa}\left(f_\kappa(\mu+\sigma;\sigma,\alpha)
    ^{-\frac{1}{1+\kappa}}\right) \\
    &=\frac{1}{\alpha} \oplus_{\alpha\kappa} 
    \ln_{\alpha\kappa}Z_\kappa(\sigma,\alpha)^\frac{1}{1+\kappa} \\
    &=\frac{1}{\alpha} + (1 + \kappa)\ln_{\alpha\kappa}
    Z_\kappa(\sigma,\alpha)^\frac{1}{1+\kappa} \\
    &= \frac{1}{\alpha} + \ln_\frac{\alpha\kappa}{1+\kappa}
    Z_\kappa(\sigma,\alpha).
    \end{align}
    \item For the extension to $d-$dimensions, the notation $\mathbf{x}^{\circ \frac{\alpha}{2}}$ indicates that each element of the vector or matrix  is raised to the power $\sfrac{\alpha}{2}.$ The $d-$D coupled stretched exponential and its required generalized entropy solution are:
    \begin{align}
        f_\kappa((\boldsymbol{\mu+\sigma})
        ^{\circ \frac{\alpha}{2}};
        \boldsymbol{\mu}^{\circ \frac{\alpha}{2}},
        \boldsymbol{\Sigma}^{\circ \frac{\alpha}{2}})
        &= \exp_{\alpha\kappa}^{-(1+d\kappa)}
        \left(\frac{d}{\alpha} \oplus_{\alpha\kappa} 
        \ln_{\alpha\kappa}
        Z_\kappa(\sigma,\alpha)^\frac{1}{1+d\kappa}
        \right) \\
        H_\kappa^{Required}(f_\kappa((\boldsymbol{\mu+\sigma})
        ^{\circ \frac{\alpha}{2}};
        \boldsymbol{\mu}^{\circ \frac{\alpha}{2}},
        \boldsymbol{\Sigma}^{\circ \frac{\alpha}{2}}),\alpha,d) 
    &= \ln_{\alpha\kappa}\left(
    f_\kappa((\boldsymbol{\mu+\sigma})
        ^{\circ \frac{\alpha}{2}};
        \boldsymbol{\mu}^{\circ \frac{\alpha}{2}},
        \boldsymbol{\Sigma}^{\circ \frac{\alpha}{2}})
    ^{-\frac{1}{1+d\kappa}}\right) \\
    &=\frac{d}{\alpha} \oplus_{\alpha\kappa} 
    \ln_{\alpha\kappa}Z_\kappa(\sigma,\alpha)^\frac{1}{1+d\kappa} \\
    &=\frac{d}{\alpha} + (1 + d\kappa)\ln_{\alpha\kappa}
    Z_\kappa(\sigma,\alpha)^\frac{1}{1+d\kappa} \\
    &= \frac{d}{\alpha} + \ln_\frac{\alpha\kappa}{1+d\kappa}
    Z_\kappa(\sigma,\alpha).
    \end{align}
\end{enumerate}
\end{definition}
\begin{remark}
    This is the only solution that associates the generalized entropy with the density at the scale, thereby minimizing the dependence on the nonlinear statistical coupling. Furthermore, in the Discussion section on thermodynamics \ref{subsec_thermo}, it will be shown that this is the precision structure required to assure that the generalization of the thermodynamic fulfills the principle that the entropy $(S)$ is equal to the energy $(U)$ per temperature $(T)$ plus the logarithm of the partition function, $S=\frac{U}{kT}+\ln Z,$ where $k$ is the Boltzmann constant. 
\end{remark}

The distinction between the required solution and the most common entropies, BGS, Rényi, Tsallis, and Normalized Tsallis, for the coupled exponential distribution is shown in Figure \ref{fig_entDensity}. Each point on a density curve represents an entropy translated to the density domain via the function $\exp_\kappa^{-(1+\kappa)}(H(f(x)).$ The generalized entropies and their translation to a density value are reviewed in Appendix \ref{app_Entropies}.

For the BGS entropy, which has a linear dependence on the shape and a logarithmic dependence on the scale, as the shape increases, its equivalent density value is further away from the scale.  The disconnect between the BGS entropy and the scale of a non-exponential distribution is the essence of why a generalized entropy is required for the uncertainty quantification of complex systems. Rényi took the first important step by replacing the geometric mean with the generalized mean, and Tsallis further improved this by utilizing a generalized logarithm. I'll prove in Section \ref{subsecCE} that the requirement to measure the uncertainty at the scale is fulfilled by the coupled entropy. 

A key element of the required entropy solution is that the expression $\frac{(x-\mu)^\alpha}{\alpha\sigma^\alpha}$ must be multiplied by a modified distribution such that its integral, equivalent to a generalization of the $\alpha-$moment, is equal to $\sfrac{1}{\alpha}$ even for non-exponential distributions. This is fulfilled by the Independent Equals distribution, known in the nonextensive statistical mechanics literature as the escort distribution. The details of this component of the coupled entropy function and its constraints will be reviewed next.  
\begin{figure}
    \centering
    \includegraphics[width=0.75\linewidth,page=1]{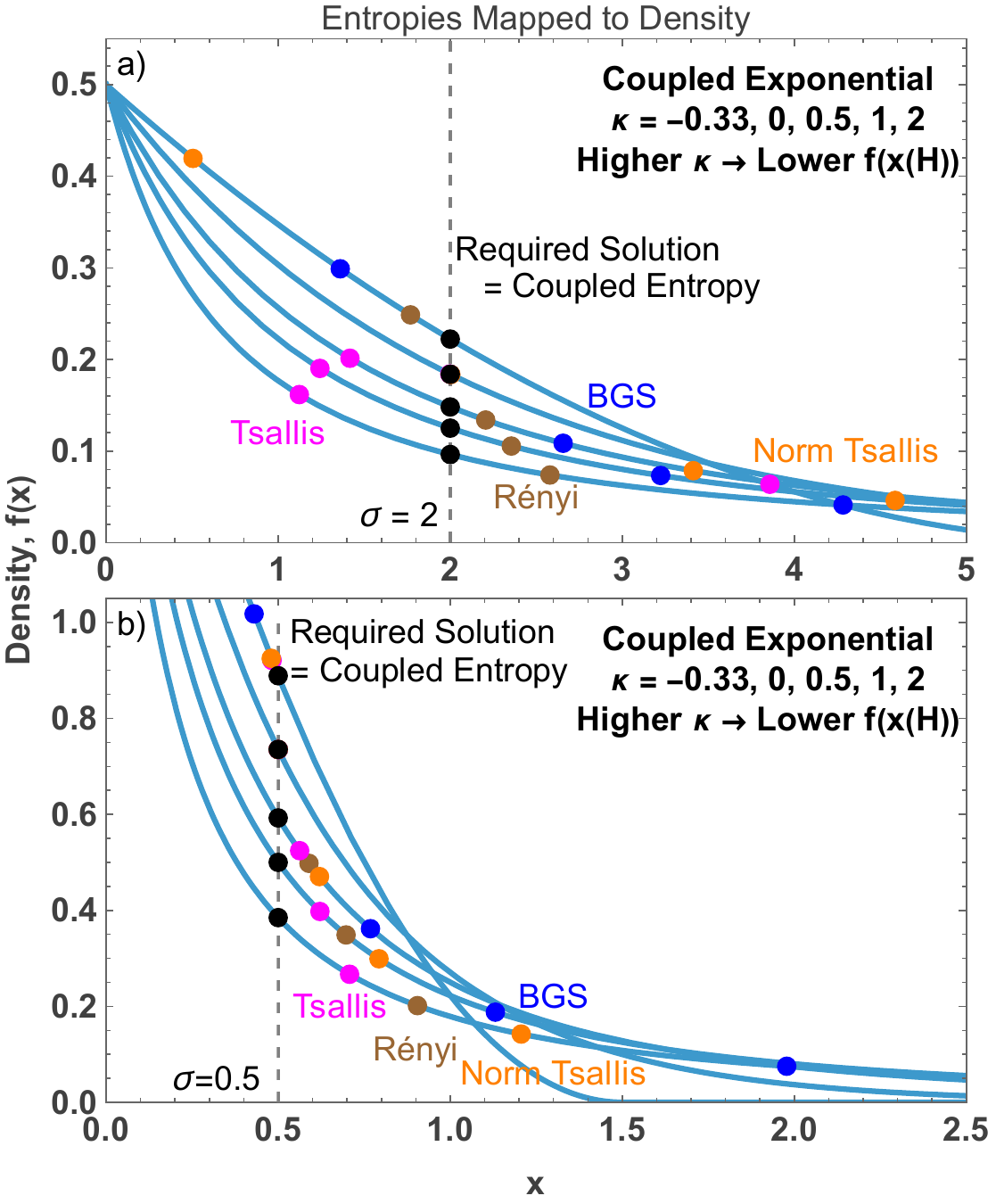}
    \caption{Entropies of the coupled exponential distribution are mapped to points on the density. Coupling values of -0.33, 0, 0.5, 1, and 2 are displayed. The required solution satisfied by the coupled entropy aligns along the scale, a) $\sigma=2,$ and b) $\sigma=0.5.$ BGS with higher entropy corresponds lower density values and a higher value of $x.$ Rényi lower the entropy measure for heavy-tailed distributions via use of the generalized mean. Tsallis further improved the measure via use of a generalized logarithm. Although the normalized Tsallis is structurally closer to the correct solution its high measure of entropy is unsuitable. All of the entropies converge for $\kappa=0.$}
    \label{fig_entDensity}
\end{figure}

\subsection{Independent Equals Moments} \label{subsec_IE}
In the heavy-tailed domain, $\kappa>0,$ the moments, $\mu_m=\int_{x\in\mathcal{X}}x^mp(x)dF(x),$ of the coupled exponential distributions are either undefined or divergent for $\kappa\geq\sfrac{1}{m}.$ Nevertheless, the escort probability \cite{beck_thermodynamics_1993,ferri_role_2005}, has been shown to define the distribution of independent random variables that share the same state (\textit{independent-equals}) \cite{nelson_independent_2022,al-najafi_independent_2024}. When the moment matches the location shape, $\alpha$, the measure is the power of the informational scale, $\sigma^\alpha$. Throughout the text the Lebesgue–Stieltjes integral is used to provide a concise expression of both discrete and continuous distributions.
\begin{lemma}[Independent Equal Moments]\label{def_IEM}
\begin{enumerate}
    \item Given a distribution with asymptotic tail shape, $\kappa$, the independent equals distribution with power $q=1+\frac{m\kappa}{1+\kappa},$
\begin{equation} \label{equ_IE}
    f/^{1+\frac{m\kappa}{1+\kappa}} \equiv 
    \frac{f(x)^{1+\frac{m\kappa}{1+\kappa}}}{\int_{x\in X}f(x)^{1+\frac{m\kappa}{1+\kappa}}dF(x)} ,
\end{equation}
has a modified shape of $\kappa'=\frac{\kappa}{1+m\kappa}$ and thus finite moments for $m'< \frac{1+m\kappa}{\kappa}$, which always includes $m$ for finite $\kappa$.
    \item Given $X\sim f_\kappa(x;\mu,\sigma,\alpha)$ distributed as a CSED equation \eqref{equ_csePDF} the independent equals moment in which $m=\alpha$ is equal to the informational scale raised to the power $\alpha$:
\begin{align}
    \mu_m^{(1+\frac{m\kappa}{1+\kappa})} &\equiv E_{(1+\frac{m\kappa}{1+\kappa)}}\left[X^m\right]
    \equiv \int_X x^m f/^{1+\frac{m\kappa}{1+\kappa}}dF(x) \\
    \mu_\alpha^{(1+\frac{\alpha\kappa}{1+\kappa})} &= \sigma^\alpha
\end{align}
\item Given $X\sim f_\kappa(x;\mu,\sigma,\alpha)$ distributed as a coupled Weibull distribution, equation \eqref{equ_cwPDF}, the $\alpha$ moment and its independent equals moment are:
    \begin{align}
        \mu_\alpha^{(1)}&=\mathrm{E}[X^\alpha]=\sigma^\alpha,
        \text{ for } -1<\kappa<\frac{1}{\alpha}\\
        \mu_\alpha^{(1+\frac{\alpha\kappa}{1+\kappa})}&=\mathrm{E}_{1+\frac{\alpha\kappa}{1+\kappa}}[X^\alpha]=\frac{\sigma^\alpha}{1+\alpha\kappa},
        \text{ for } \kappa>-\frac{1}{\alpha}
    \end{align}
\end{enumerate}
\end{lemma}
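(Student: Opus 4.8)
\emph{Proof proposal.} The plan is to dispatch all three parts by the same device: a power-law change of variables that turns every integral in sight into a ratio of Beta functions, which then collapses through the recursion $\Gamma(z+1)=z\Gamma(z)$. For part (1), I would start from Lemma~\ref{lem_CESD}(I), which gives that the $1$-D CSED kernel $(1+\kappa|x/\sigma|^\alpha)^{-\frac{1+\kappa}{\alpha\kappa}}$ has a pure power-law tail $|x|^{-(1+\kappa)/\kappa}$. Raising the density to the escort power $q=1+\frac{m\kappa}{1+\kappa}=\frac{1+(m+1)\kappa}{1+\kappa}$ simply multiplies the exponent of the kernel, producing $(1+\kappa|x/\sigma|^\alpha)^{-\frac{1+(m+1)\kappa}{\alpha\kappa}}$. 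Comparing this to the canonical CSED exponent $-\frac{1+\kappa'}{\alpha\kappa'}$ (the change of the inner coefficient from $\kappa$ to $\kappa'$ being absorbed by a rescaling of $\sigma$, which leaves the shape untouched) forces $\frac{1}{\kappa'}=\frac1\kappa+m$, i.e.\ $\kappa'=\frac{\kappa}{1+m\kappa}$. Hence the escort is again a coupled stretched exponential, and the tail argument of Lemma~\ref{lem_CESD} shows its moments are finite exactly for $m'<1/\kappa'=\frac{1+m\kappa}{\kappa}$; the trivial inequality $m<\frac{1+m\kappa}{\kappa}\Leftrightarrow 0<1$ then certifies that this range always contains $m$.

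For part (2), with $m=\alpha$ I would compute the escort $\alpha$-moment head-on rather than invoke part (1). Writing $q=1+\frac{\alpha\kappa}{1+\kappa}$ and $p=\frac{1+\kappa}{\alpha\kappa}q=\frac{1+(\alpha+1)\kappa}{\alpha\kappa}$, the moment is the ratio $\bigl(\int |x|^\alpha(1+\kappa|x/\sigma|^\alpha)^{-p}dx\bigr)\big/\bigl(\int(1+\kappa|x/\sigma|^\alpha)^{-p}dx\bigr)$. The substitution $t=\kappa|x/\sigma|^\alpha$ turns numerator and denominator into $B\!\left(\frac1\alpha+1,\,p-\frac1\alpha-1\right)$ and $B\!\left(\frac1\alpha,\,p-\frac1\alpha\right)$, up to a common factor that leaves a prefactor $\sigma^\alpha/\kappa$ on the numerator. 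The key arithmetic is that $p-\frac1\alpha-1=\frac{1}{\alpha\kappa}$ precisely because of the chosen $q$, so the Beta ratio reduces to $\frac1\alpha\cdot\frac{1}{p-\frac1\alpha-1}=\kappa$, which exactly cancels the prefactor and leaves $\mu_\alpha^{(q)}=\sigma^\alpha$. This clean cancellation is the whole content of the statement and is what singles out the escort power $q$.

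For part (3), the same substitution $t=\kappa r^\alpha$ applied to the coupled Weibull density \eqref{equ_cwPDF}, together with its explicit normalization $Z_\kappa$, reduces both $E[X^\alpha]$ (the case $q=1$) and the escort moment $E_{1+\frac{\alpha\kappa}{1+\kappa}}[X^\alpha]$ to ratios of Beta functions; the $\Gamma$-recursions then deliver $\sigma^\alpha$ and $\frac{\sigma^\alpha}{1+\alpha\kappa}$ respectively. The admissible ranges are read off from the positivity of the Beta arguments: since the Weibull tail behaves as $|x|^{-1-1/\kappa}$, the integral $\int r^\alpha\,\text{(kernel)}$ converges only when $\alpha<1/\kappa$, giving $-1<\kappa<\frac1\alpha$ for the ordinary moment, whereas the heavier-weighting escort shifts the effective shape to $\frac{\kappa}{1+\alpha\kappa}$ and relaxes this to $\kappa>-\frac1\alpha$.

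The main obstacle I anticipate is not any single integral but the constant bookkeeping in part (3): because the Weibull carries the extra factor $r^{(\alpha-1)}$ (which is itself raised to the escort power) and a normalization with a different Beta structure than the CSED, the $\Gamma$-recursions must be tracked carefully, and the scale $\sigma$ appearing in the stated results must be identified with the informational scale of Section~\ref{subsecInfoScale} rather than a bare rate parameter in order for the factors of $\alpha$ and $(1\pm\alpha\kappa)$ to collapse to the claimed clean forms. A secondary point requiring care is the uniform treatment of the three regimes $\kappa>0$, $\kappa=0$, and $-\frac1d<\kappa<0$, where for compact support the integration domain is finite and the complete Beta function is obtained from the incomplete one only after the endpoint contribution is checked to vanish.
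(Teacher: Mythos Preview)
Your proposal is correct, and for parts (2) and (3) it takes a genuinely different route from the paper. For part (1) you and the paper do essentially the same thing: read off the new tail exponent after raising the kernel to the escort power and solve $\frac{1+\kappa'}{\kappa'}=\frac{1+(m+1)\kappa}{\kappa}$. For part (2), however, the paper does \emph{not} compute the Beta ratio directly; instead it invokes part (1) to observe that the escort of a CSED is again a CSED with $\kappa'=\kappa/(1+\alpha\kappa)$ and, since the inner coefficient $\kappa/\sigma^\alpha$ is left unchanged by the powering, $\sigma'^\alpha=\sigma^\alpha/(1+\alpha\kappa)$; it then plugs these into the standard CSED $\alpha$-moment formula $\sigma'^\alpha/(1-\alpha\kappa')$ and watches the factors cancel. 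Your head-on Beta computation is self-contained and makes the cancellation mechanism---that $p-\tfrac1\alpha-1=\tfrac{1}{\alpha\kappa}$ precisely for the chosen $q$---completely explicit, whereas the paper's argument is shorter but presupposes both the CSED moment formula and the closure of the family under the escort map. For part (3) the contrast is sharper still: the paper does not reduce to Beta functions at all but instead replaces the integrand by its Weibull asymptotic near $x=0$ and its Type~I Pareto asymptotic near $x\to\infty$, evaluating what amount to endpoint contributions of approximate antiderivatives---an argument that is heuristic as written. Your uniform Beta-function route is the more rigorous of the two and is better suited to reading off the admissible $\kappa$-ranges from positivity of the Beta arguments; the bookkeeping caution you flag about the extra $r^{\alpha-1}$ factor being raised to the escort power is exactly the place where the Weibull calculation diverges from the CSED one and is the right thing to watch.
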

\begin{proof}
    \begin{enumerate}
        \item The asymptotic distribution is a Type I Pareto distribution (pure power-law), $f(x) \sim C x^{-(\frac{1}{\kappa}+1)}$ as $x\rightarrow\infty$, and therefore, the independent equals distribution is $F^{(1+\frac{m\kappa}{1+\kappa})}\sim C'x^{-\frac{1+(1+m)\kappa}{\kappa}}$. Therefore,
         \begin{align}
             \frac{1 + \kappa '}{\kappa '}
                &= \frac{1 + m\kappa + \kappa}{\kappa} \\ 
                \frac{1}{\kappa'}&= \frac{1}{\kappa} + m \\
                \kappa'&=\frac{\kappa}{1+m\kappa}
         \end{align}
          and thus, the finite moments are $m_{finite}<\frac{1}{\kappa'}=m+\frac{1}{\kappa}$.
        \item Given that the asymptotic shape of the independent equals distribution of the coupled stretched exponential is modified to $\kappa'=\frac{\kappa}{1+\alpha \kappa}$ and the ratio $\frac{\kappa}{\sigma^\alpha}$ is unchanged, then $\sigma'^\alpha=\frac{\sigma^\alpha}{1+\alpha\kappa}$. Therefore,
    \begin{align}
        \mu_\alpha^\alpha&=\int_X x^\alpha f_{\kappa'}(\mu,\sigma',\alpha)\mathrm{d}x=\sigma'^\alpha\frac{1}{1-\alpha\kappa'}\\
        &=\frac{\sigma^\alpha}{1+\alpha\kappa}\frac{1+\alpha\kappa}{1}=\sigma^\alpha.
    \end{align}
    \item Likewise, the independent equals distribution of the coupled Weibull has a modified shape and scale of $\kappa'=\frac{\kappa}{1+\alpha \kappa}$ and $\sigma'^\alpha=\frac{\sigma^\alpha}{1+\alpha\kappa}$, respectively. Given that the coupled Weibull distribution approximates a Weibull distribution for $x\rightarrow0$, and a Type-I Pareto distribution for $x\rightarrow\infty$, then  
    \begin{align}
        \mu_\alpha^{(1+\frac{\alpha\kappa}{1+\kappa})}&=\int_0^\infty x^\alpha f_{\kappa'}(0,\sigma',\alpha)\mathrm{d}x\\
        &=\int x^\alpha\frac{\alpha}{\sigma'}\left(\frac{x}{\sigma'}\right)^{\alpha-1} \left(\alpha\kappa'\left(\frac{x}{\sigma'}\right)^{-(\frac{1}{\kappa'}+\alpha)} \right)\mathrm{d}x\vert_{x\rightarrow\infty} \\
        & \ \ \ \ -\int x^\alpha \frac{\alpha}{\sigma'}\left(\frac{x}{\sigma'}\right)^{\alpha-1} \exp\left(-\left(\frac{x}{\sigma'}\right)^\alpha\right) \mathrm{d}x\vert_{x=\mu}\\
        &=\int C x^{-\frac{1}{\kappa}-1}\mathrm{d}x\vert_{x\rightarrow\infty}
        -\left(-\sigma'^\alpha\Gamma\left(2,-\left(\frac{x}{\sigma'}\right)^\alpha\right)\vert_{x=0} \right) \\
        &=0+\sigma'^\alpha=\frac{\sigma^\alpha}{1+\alpha\kappa}
        \text{ for } \kappa \geq -\frac{1}{\alpha}.
    \end{align}
    Notice that the independent equals moment is proportional to the modified scale. Therefore, the unmodified moment is proportional to the scale but over a restricted heavy-tailed domain:
    \begin{align}
        \mu_\alpha^{(1)}&=\int_0^\infty x^\alpha f_\kappa(0,\sigma,\alpha)\mathrm{d}x\\
        &=\int x^\alpha\frac{\alpha}{\sigma}\left(\frac{x}{\sigma}\right)^{\alpha-1} \left(\alpha\kappa\left(\frac{x}{\sigma}\right)^{-(\frac{1}{\kappa}+\alpha)} \right)\mathrm{d}x\vert_{x\rightarrow\infty} \\
       & \ \ \ \  -\int x^\alpha \frac{\alpha}{\sigma}\left(\frac{x}{\sigma}\right)^{\alpha-1} \exp\left(-\left(\frac{x}{\sigma}\right)^\alpha\right) \mathrm{d}x\vert_{x=\mu}\\
        &=\int C x^{\alpha-\frac{1}{\kappa}-1}\mathrm{d}x\vert_{x\rightarrow\infty}
        -\left(-\sigma^\alpha\Gamma\left(2,-\left(\frac{x}{\sigma}\right)^\alpha\right)\vert_{x=0} \right) \\
        &=0+\sigma^\alpha
        \text{ for } -1<\kappa < \frac{1}{\alpha}.
    \end{align}
    
    \end{enumerate}
\end{proof}

\subsection{Entropies of the Coupled Exponential Family}\label{subsecEntropies}
Over the past sixty years, a considerable breadth of research in statistical physics and information theory has focused on generalizations of the BGS entropy suitable for analysis of complex systems in which non-exponential distributions maximize a generalized entropy. In the 1960s, Alfred Rényi \cite{renyi_measures_1961} proposed that the generalized mean replace the geometric mean, which composes the probabilities for the BGS entropy. The coupled exponential family is the maximizing distribution for Rényi entropy, given a match between the power of the generalized mean and the exponent of the distribution. While the Rényi entropy transposes the generalized mean of the probabilities to an entropy with the logarithmic function, Constantino Tsallis \cite{tsallis_possible_1988, tsallis_nonadditive_2009-1} took this construction further by proposing a generalized logarithm for the transposition between the probability domain and the entropy domain. In doing so, Tsallis broke the additivity of entropy for composed subsystems. While the Rényi entropy is additive, the composition of the Tsallis entropy includes a nonlinear term. Because the Tsallis entropy uses the same generalized mean as the Rényi entropy, it is also maximized by the coupled exponential family. Nevertheless, I'll show that both the Rényi and the Tsallis entropy have a defect in their definition of the logarithmic transposition that undermines their entropic solution.

In parallel with the search for entropy functions relevant for the analysis of nonlinear systems has been a theoretical maturation of the minimum set of axioms  needed to specify the uniqueness of these functions. If, as I suggest, there is a missing requirement in the form of a generalized entropy, there must also be a modified or additional requirement for the axioms. Just as Euclid's first four geometric postulates are foundational primitives, the first three Shannon-Khinchin axioms \cite{shannon_mathematical_1948,khinchin_mathematical_1957} are necessary for all proposed entropies with a few exceptions. The first three axioms are:
\begin{enumerate}
    \item \textbf{Continuity}: Entropy, $H$, only depends on the probabilities of a distribution, $H(\mathbf{p})=H(p_1,p_2,...,p_W),$ where $W$ is the number of states in the system and $\textbf{p}=(p_1,p_2,...p_W)$ is a normalized probability distribution.
    \item \textbf{Maximality}: Given only the constraint of a normalized distribution, the uniform distribution maximizes the entropy, $H(\frac{1}{W}, \frac{1}{W},...,\frac{1}{W})\geq H(p_1,p_2,...,p_W)$.
    \item \textbf{Expandability}: A state with $p_i=0$ does not change the entropy, $H(p_1,p_2,...,p_W,0)$$=H(p_1,p_2,...,p_W)$.
\end{enumerate}

Assuming just these three axioms, Hanel and Thurner \cite{hanel_comprehensive_2011,Hanel2011a} used two different scaling properties to define a broad family of (c,d)-entropies. The parameter $c$ is equivalent to the generalization parameter for the Rényi and Tsallis entropies, $q=1+\frac{\alpha\kappa}{1+d\kappa}.$ For $q=1$ the (1,d)-entropy is maximized by the stretched exponential distributions; however, the definition utilized differs from the common expression $\exp\left(-\frac{x^\alpha}{\alpha\sigma^\alpha}\right).$ Further discussion of the coupled entropies relationship with Hanel, Thurner classification is in Appendix \ref{app_Hanel}.

Tempesta \cite{tempesta_beyond_2016, tempesta_universality_2020} defined a universal class of entropies that are composable with a power series of nonlinear terms. Thus, just as Euclid's fifth axiom can be generalized to define a variety of Riemannian geometries, the fourth Shannon-Khinchin axiom generalizes to define non-additive but composable entropies. In fact, the information geometry of the resulting metrics induces non-Euclidean geometries \cite{amari_geometry_2011}.

    4. \textbf{Composability}: Tempesta's universal entropy, $H^U(\textbf{p})\equiv \sum_i p_i G\left(\ln p_i^{-1}\right)$, where the power series composition function is $G(t) \equiv \sum_{k=0}^\infty a_k \frac{t^{k+1}}{k+1}$, uniquely satisfies the following composition axiom. Given two statistically independent systems with probabilities, $p_i^I$ and $p_j^{II}$, the joint entropy of the two systems is: 
\begin{align}
    H^U(\textbf{p}^I,\textbf{p}^{II}) = G\left(G^{-1}(H^U(\textbf{p}^I)) + G^{-1}(H^U(\textbf{p}^I))\right)
\end{align}
where $G^{-1}(s)$ is the compositional inverse of G(t).

The composition function for the Tsallis entropy converges to a function with one nonlinear term. Furthermore, the Tsallis' entropy is composable for non-independent systems:
    \begin{alignat}{2}
        H^T\left(\left\{p_{ij}^{I,II}\right\}\right)
        &=H^T\left(\left\{p_{i}^{I}\right\}\right)
        +\sum_i p_i^I H^T\left(\left\{p^{II}(j|i)\right\}\right) \\
        &+ (1-q) H^T\left(\left\{p_{i}^{I}\right\}\right) 
        \sum_i p_i^I H^T\left(\left\{p^{II}(j|i)\right\}\right) \\
        &=H(p_1,p_i,...,p_{W_i})
        +\sum_i p_i^I H^T\left(\frac{p_{i1}}{p_i},...\frac{p_{iW_j}}{p_i}\right) \\
        &+ (1-q) H^T(p_1,p_i,...,p_{W_i})
        \sum_i p_i^I H^T\left(\frac{p_{i1}}{p_i},...\frac{p_{iW_j}}{p_i}\right).
    \end{alignat}
where $p_i^I$ is the probability of the first system, and $p_j^{II}$ is the probability of the second system. $p^{II}(j|i)$ is the joint probability of the second system given a particular state of the first system. The Shannon-Khinchin additivity axiom is recovered for $q=1$.

Both the Rényi and Tsallis entropies are trace form in which the generalized mean of the probabilities is computed and then the natural and generalized logarithms are applied, respectively. I will parameterize these entropies using the relationship $q=\alpha^{R\Acute{e}nyi}=1+\frac{\alpha\kappa}{1+d\kappa}$  with $d=1,$ in order to analyze the functions with relation to the coupled exponential family. 
\begin{align}
    \textbf{Rényi Entropy:  } H_\kappa^R(f(\mathbf{x}); \alpha)&= -\ln\left(\int_{\mathbf{x}\in\mathcal{X}} f(\mathbf{x})f(\mathbf{x})^\frac{\alpha\kappa}{1+\kappa}
    dF(\mathbf{x})\right)^\frac{1+\kappa}{\alpha\kappa} \\
    \textbf{Tsallis Entropy:  } H_\kappa^T(f(\mathbf{x}); \alpha)&= 
    - \int_{\mathbf{x}\in\mathcal{X}} f(\mathbf{x})
     \ln_\frac{\alpha\kappa}{1+d\kappa} f(\mathbf{x})
     dF(\mathbf{x})\\
    &=-\ln_\frac{\alpha\kappa}{1+\kappa}\left(
    \int_{\mathbf{x}\in\mathcal{X}} 
    f(\mathbf{x}) f(\mathbf{x})^\frac{\alpha\kappa}{1+\kappa}dF(\mathbf{x})\right)^\frac{1+\kappa}{\alpha\kappa}\\
    \textbf{Normalized Tsallis Entropy:  } \nonumber \\
    H_\kappa^{NT}(f(\mathbf{x}); \alpha)
    &=\int_{\mathbf{x\in\mathcal{X}}}
     f/^\frac{1+(d+\alpha)\kappa}{1+d\kappa}(\mathbf{x})
     \ln_\frac{\alpha\kappa}{1+d\kappa} 
     \left(f(\mathbf{x})^
     {-\frac{1+(d+\alpha)\kappa}{\alpha\kappa}}\right)
     dF(\mathbf{x}) \\
    &=\ln_\frac{\alpha\kappa}{1+\kappa}
    \left(
      \int_{\mathbf{x\in\mathcal{X}}}
      f/^\frac{1+(d+\alpha)\kappa}{1+d\kappa}(\mathbf{x})
      f(\mathbf{x})^{-\frac{1+(d+\alpha)\kappa}{1+d\kappa}} dF(\mathbf{x})\right)^
      {-\frac{1+d\kappa}{1+(d+\alpha)\kappa}}, \\ 
   \text{where } f^{\frac{1+\alpha\kappa}{1+\kappa}}(\mathbf{x}) &\equiv \frac{f(\mathbf{x})^{(\frac{1+\alpha\kappa}{1+\kappa})}}{\int_{\mathbf{x}\in\mathcal{X}}
   f(\mathbf{x})^\frac{1+\alpha\kappa}{1+\kappa}}
\end{align}
  \begin{figure}[ht]
      \centering
      \includegraphics[width=0.75\linewidth,page=1]{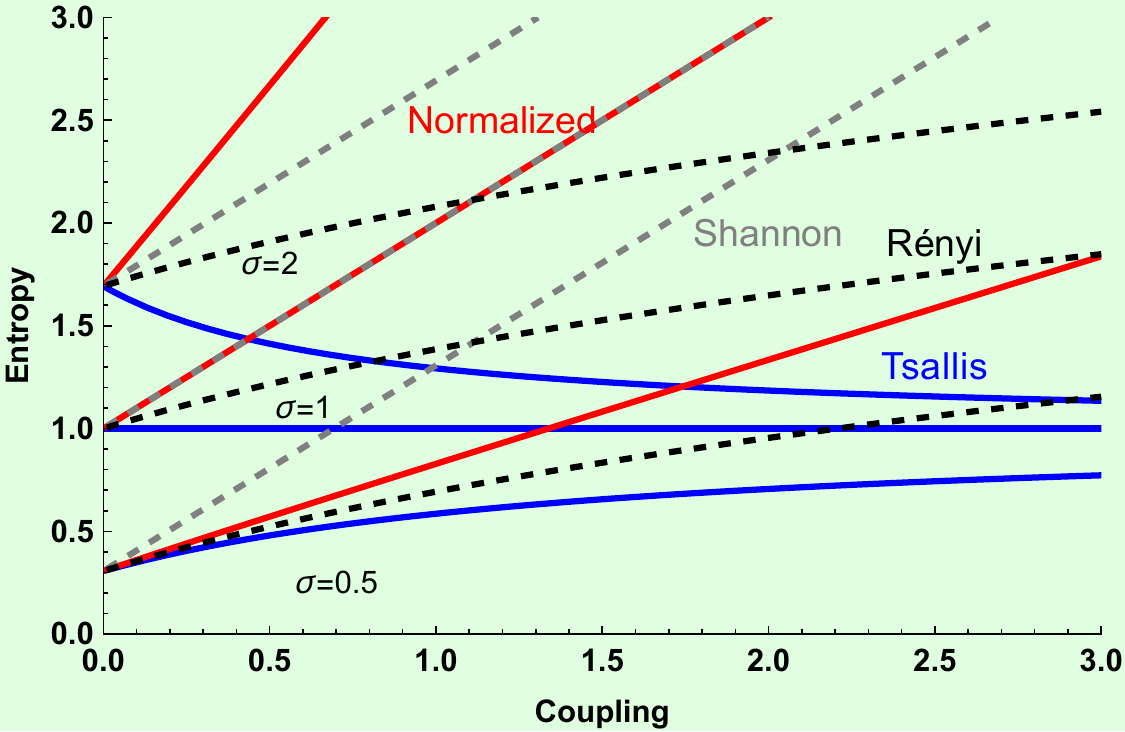} 
      \caption{\textbf{Entropies of the Coupled Exponential Distribution} The entropy versus the coupling $(\kappa)$ is shown for the Shannon (gray, dashed), Rényi (black, dashed), Tsallis (blue), and Normalized Tsallis (red) entropies. Scale $(\sigma)$ values of 0.5, 1, and 2 are shown. Shannon is linear and Renyi is logarithmic with the coupling. Both are logarithmic with the scale. Neither the Tsallis or Normalized Tsallis entropies provide a consistent metric of the scale of the distribution. Tsallis has an inverse relationship with the scale, converging to 1, and the Normalized Tsallis multiplies the scale term with the coupling.} \label{fig_entropies}
  \end{figure}
\begin{table}
    \centering
     \caption{Entropies of the Coupled Exponential Distribution}
    \label{tab_entropies}
    \begin{tabular}{ccc}\toprule
        \textbf{Entropy} & \textbf{of Coupled Exp Dist.} & \textbf{Description}\\\midrule
        \textbf{\textit{BGS}}& $1+\ln(\sigma)+\kappa$ & Logarithmic in Scale; Linear in Shape \\
        \textbf{\textit{Rényi}} & $\ln\sigma+(1+\frac{1}{\kappa})\ln(1+\kappa)$ & Logarithmic in Scale \& Shape\\
        \textbf{\textit{Tsallis}} & $1-\frac{1}{1+\kappa}\ln_\frac{\kappa}{1+\kappa} {\sigma^{-1}}$ & Inverse in Scale \& Shape\\
        \textbf{\textit{Normalized Tsallis}} & $1+(1+\kappa)\ln_\frac{\kappa}{1+\kappa}{\sigma}+\kappa$ & Multiplicative in Scale \& Shape\\
        \textbf{\textit{Required: Coupled}}&  $1+\ln_\frac{\kappa}{1+\kappa} \sigma$& Generalized Log of Scale\\ \bottomrule
    \end{tabular}
   
\end{table}

In Figure (\ref{fig_entropies}) and Table  (\ref{tab_entropies}) the entropy solution for the coupled exponential distribution (CED) equation \eqref{equ_cexpdist} is plotted and listed, respectively. The BGS entropy of the CED is logarithmic in the scale and linear in the shape. The linear dependence on the shape overwhelms the entropy for distributions with a large shape parameter; thus, there are analytic advantages for generalized entropies that reduce or remove the shape dependence. Indeed, the Rényi entropy changes the shape dependence to logarithmic, while maintaining the logarithmic scale dependence. The Tsallis entropy does eliminate the linear dependence on the shape; unfortunately, the generalized logarithm is of the inverse of the scale rather than the scale. Worse, the scale term includes an inverse dependence on the shape, such that as $\kappa\rightarrow\infty$ the Tsallis entropy of the CED goes to 1. The Normalized Tsallis entropy corrects the direct dependence on the generalized logarithm of the scale, but creates a more significant problem. The NTE has both a linear dependence on the shape and the shape multiplies the scale term, causing very rapid growth for $\sigma>1$. 

Next, I'll show that the Coupled Entropy is unique in fulfilling the requirement that the generalized entropy be a measure of the uncertainty at the scale.

\subsection{Uniqueness of the Coupled Entropy}\label{subsecCE}
The inadequacy of the Tsallis entropy's solution for the coupled exponential distribution provides a cautionary warning about axiomatic uniqueness proofs. While the Tsallis entropy was proven to be maximized by the $q-$exponential and $q-$Gaussian distributions, these were new definitions for non-exponential distributions with inadequate justification as to why they should replace the Pareto and Gosset (Student) definitions. In turn, the Suyari \cite{suyari_generalization_2004} and Furuichi \cite{furuichi_uniqueness_2005} "uniqueness" theorems did not so much prove that the Tsallis entropy was unique but rather designed axioms specifying the Tsallis entropy under the assumption that it had already demonstrated sufficient merit to justify an axiomatic definition.

A full proof that the coupled entropy is maximized by the CSEDs is completed in Section  \ref{sec_maxCE}. Here, I will review a heuristic derivation of the coupled entropy \cite{nelson_average_2017}, though a derivation using either variational methods \cite{wang_probability_2008, hanel_generalized_2012} , or the extensivity and group-theoretic requirements specified by Tempesta \cite{tempesta_universality_2020} would be a valuable contribution. First, the structure of the CSEDs equation \eqref{equ_csePDF} , which has a different multiplicative term $(\alpha\kappa)$ than its exponential term $\left(-\frac{1+d\kappa}{\alpha\kappa}\right)$, dictates that the coupled logarithm for the coupled entropy inverts these terms, $\ln_{\alpha\kappa} x^{-\frac{1}{1+d\kappa}}=$$\frac{1}{\alpha\kappa}\left(x^{-\frac{\alpha\kappa}{1+d\kappa}}-1\right).$ Second, putting aside the partition function (normalization) for the moment, the inversion of the coupled exponential leaves an average over the multivariate argument $\left(\frac{1}{\alpha}\left(\mathbf{x}^\top \mathbf{\Sigma}^{-1} \mathbf{x}\right)^\frac{\alpha}{2}\right)$.  If this average is taken over the distribution, then the resulting $\alpha-$moment will be dependent on the shape and only finite for $\kappa<\frac{1}{\alpha}.$ Thus, the independent equals moment of equation  \eqref{def_IEM} is required. Each of these moments over the $d^2$ terms cancels its $\Sigma_{ij}^{-1}$ term, leaving $\frac{d}{\alpha}.$ There is a third issue regarding the dependence of the solution on the power $u^\frac{\alpha}{d}$, which will be addressed as an optional, non-trace power of $\frac{1}{\gamma}.$

\begin{definition}[The Coupled Entropy]
    Given a $d-$dimensional random variable $\mathbf{X}$ over the sample space, $\mathcal{X}$, a coupling parameter that constitutes the asymptotic shape of the coupled exponential family, $\kappa > -\frac{1}{d},$ a location shape, $\alpha>0$, and a non-trace power factor, $\gamma>0$, the coupled entropy is defined as: 
    {\footnotesize
    \begin{align}
        & H_\kappa(\mathbf{X};\alpha,d,\gamma) \nonumber \\
        &= \left[\left\{
        \begin{matrix}
        \int_{\mathbf{x}\in\mathcal{X}}
        f/^\frac{1+(d+\alpha)\kappa}{1+d\kappa}(\mathbf{x})
        \ln_{\alpha\kappa}f^{-\frac{1}{1+d\kappa}}(\mathbf{x})
        dF(\mathbf{x})
        & \kappa \neq 0 \\
         -\int_{\mathbf{x}\in\mathcal{X}}
         f(\mathbf{x})\ln f(\mathbf{x})dF(\mathbf{x})
         & \kappa=0
        \end{matrix}\right.
        \right]^\frac{1}{\gamma}. \nonumber \\
        &= 
        \left[\left\{
        \begin{matrix}
        \ln_{\alpha\kappa}\left(
        \int_{\mathbf{x}\in\mathcal{X}}
        f/^{\frac{1+(d+\alpha)\kappa}{1+d\kappa}}(\mathbf{x})
        f/^\frac{\alpha\kappa}{1+d\kappa}(\mathbf{x})
        dF(\mathbf{x}) \right)
        ^{-\frac{1}{\alpha\kappa}}
        & \kappa \neq 0 \\
        -\int_{\mathbf{x}\in\mathcal{X}}
         f(\mathbf{x})\ln f(\mathbf{x})dF(\mathbf{x}) 
         & \kappa=0
        \end{matrix} \label{equ_CE}
        \right.\right]^\frac{1}{\gamma} \nonumber \\ 
    \end{align} 
    \nonumber } 
The first set of expressions shows the generalized mean (geometric mean for $\kappa=0$) within the coupled logarithm, while the second set of expressions shows the arithmetic average outside the coupled logarithm. The equivalence of the expressions is derived in Appendix \ref{app_Entropies}. For discrete distributions, the generalized logarithm of the generalized mean converges to the logarithm of the geometric mean, though that form cannot be expressed with integrals.
\end{definition}
\newpage

\begin{remark}[Structure \lowercase{o}f Coupled Entropy]
\leavevmode\newline
    \begin{itemize}
        \item The exponent of $\frac{1}{\gamma}$ is defined outside the coupled logarithm and summation, which follows the definition of \cite{tempesta_multivariate_2020, tempesta_universality_2020}, since preservation of the structure of the coupled logarithm is critical to achieving a unique solution for the coupled stretched exponentials.
        \item While the coupled entropy shares some similarities with the Tsallis entropy, the critical distinctions are:
        \begin{itemize}
            \item The divisive normalization of the coupled logarithm, $\frac{1}{\alpha\kappa},$ differs from the exponent of the probability, $-\frac{\alpha\kappa}{1+d\kappa}.$ The source of this distinction is the difference in exponents for the sf and pdf of the coupled exponentials.
            \item The average is over the independent equals distribution, assuring measurable moments for all coupling values.
        \end{itemize}
    \end{itemize}
\end{remark}

\begin{theorem}[Uniqueness: Coupled Entropy \lowercase{o}f Coupled Stretched Exponentials]
    \label{lem_unique}
    Given a $d-$dimensional random variable, $\mathbf{X} \sim f_\kappa(\mathbf{x}; \boldsymbol{\mu},\boldsymbol{\Sigma},\alpha,d)$, where$ f_\kappa(\mathbf{x}; \mu,\sigma,\alpha,d)$, or $f_\kappa(\mathbf{x})$ for short, is the CSED equation \eqref{equ_csePDF}, then the coupled entropy with matching values of the $\kappa,\alpha,\text{and }d,$ and $\gamma=1$ is 
\begin{equation}
    H_\kappa (\mathbf{X};\alpha,d,1)=\frac{d}{\alpha} + \ln_\frac{\alpha\kappa}{1+d\kappa}Z_\kappa(\sigma,\alpha,d).
\end{equation} 
This is the density of the distribution at the radius $r=\left(((\mathbf{x}-\boldsymbol{\mu})^{\circ \sfrac{\alpha}{2} \top} 
(\boldsymbol{\Sigma}^{\circ \sfrac{\alpha}{2}})^{-1}(\mathbf{x}-\boldsymbol{\mu})^{\circ \sfrac{\alpha}{2}}\right)^\frac{1}{\alpha}=d$ which for example occurs when $x_i=\mu_i+\sigma_{ii}$ for all $i.$ Thus, the measure of  the dependence on the shape of the distribution is constrained to the partition function and the generalized logarithm. Furthermore, the coupled logarithm parameter is constrained to $0<\frac{\alpha\kappa}{1+d\kappa}<\frac{\alpha}{d}$ for the heavy-tailed domain, $0<\kappa<\infty.$ 
\end{theorem}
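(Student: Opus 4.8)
The plan is to insert the CSED \eqref{equ_csePDF} into the first (trace) form of the coupled entropy \eqref{equ_CE} with $\gamma=1$, reduce the whole expression to a single escort moment of $r^\alpha$, and then simplify algebraically. Writing the density as $f_\kappa(\mathbf{x})=Z_\kappa^{-1}(1+\kappa r^\alpha)^{-\frac{1+d\kappa}{\alpha\kappa}}$ and abbreviating the independent-equals exponent as $p=\frac{1+(d+\alpha)\kappa}{1+d\kappa}$, the first step is to evaluate the integrand $\ln_{\alpha\kappa}f_\kappa^{-\frac{1}{1+d\kappa}}$ in closed form. Since $f_\kappa^{-\frac{1}{1+d\kappa}}=Z_\kappa^{\frac{1}{1+d\kappa}}(1+\kappa r^\alpha)^{\frac{1}{\alpha\kappa}}$ and $\ln_{\alpha\kappa}y=\frac{1}{\alpha\kappa}(y^{\alpha\kappa}-1)$, the $\frac{1}{\alpha\kappa}$ power carried by the probability and the $\frac{1}{\alpha\kappa}$ scaling of the coupled logarithm are engineered to cancel, leaving the affine expression $\ln_{\alpha\kappa}f_\kappa^{-\frac{1}{1+d\kappa}}=\frac{1}{\alpha\kappa}\bigl[Z_\kappa^{\frac{\alpha\kappa}{1+d\kappa}}(1+\kappa r^\alpha)-1\bigr]$. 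This cancellation is exactly the non-standard pairing of exponents that distinguishes the coupled entropy from the Tsallis entropy, and it is what collapses the integral to a moment of $r^\alpha$.

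Integrating this against the escort $f_\kappa/^{p}$ and using that the escort is normalized gives $H_\kappa=\frac{1}{\alpha\kappa}\bigl[Z_\kappa^{\frac{\alpha\kappa}{1+d\kappa}}(1+\kappa\,m)-1\bigr]$ with $m=\int_{\mathbf{x}\in\mathcal{X}}f_\kappa/^{p}(\mathbf{x})\,r^\alpha\,dF(\mathbf{x})$, so everything reduces to the single identity $m=d$. I expect establishing $m=d$ to be the main obstacle; it is the $d$-dimensional generalization of the independent-equals moment of Lemma \ref{def_IEM}. The plan is first to observe, exactly as in that lemma, that $f_\kappa^{p}$ renormalized is again a CSED with modified shape $\kappa'=\frac{\kappa}{1+\alpha\kappa}$ and modified scale $\boldsymbol{\Sigma}'^{\circ\alpha/2}=\frac{1}{1+\alpha\kappa}\boldsymbol{\Sigma}^{\circ\alpha/2}$, so that the escort's own radial variable satisfies $r'^{\alpha}=(1+\alpha\kappa)r^\alpha$ and $m=\frac{1}{1+\alpha\kappa}\,\mathrm{E}_{\kappa'}[r'^{\alpha}]$. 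Then I would compute $\mathrm{E}_{\tilde\kappa}[r^\alpha]$ for a generic CSED by passing to the scalar radial variable $s=r^\alpha$, whose density is proportional to $s^{\frac{d}{\alpha}-1}(1+\tilde\kappa s)^{-(\frac{d}{\alpha}+\frac{1}{\alpha\tilde\kappa})}$ --- the very integrand that produces the Beta-function normalization \eqref{equ_cseNorm}. The resulting ratio of Beta functions collapses through $\frac{\Gamma(\frac{d}{\alpha}+1)}{\Gamma(\frac{d}{\alpha})}=\frac{d}{\alpha}$ to give $\mathrm{E}_{\tilde\kappa}[r^\alpha]=\frac{d}{1-\alpha\tilde\kappa}$; substituting $\tilde\kappa=\kappa'$ yields $\mathrm{E}_{\kappa'}[r'^{\alpha}]=d(1+\alpha\kappa)$ and hence $m=d$. (Equivalently, after whitening the quadratic form by $(\boldsymbol{\Sigma}^{\circ\alpha/2})^{-1}$ one has $r^\alpha=\sum_k y_k^2$, and spherical symmetry makes the $d$ contributions equal, each evaluating to one under the escort.) The divergence of the ordinary moment $\frac{d}{1-\alpha\kappa}$ for $\kappa\geq\frac{1}{\alpha}$ against the finiteness of $m=d$ for all $\kappa>0$ is precisely why the escort average is mandatory.

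The final step is pure algebra. With $m=d$ the entropy becomes $H_\kappa=\frac{1}{\alpha\kappa}\bigl[(1+d\kappa)Z_\kappa^{\frac{\alpha\kappa}{1+d\kappa}}-1\bigr]$; splitting $\frac{1+d\kappa}{\alpha\kappa}=\frac{d}{\alpha}+\frac{1}{\alpha\kappa}$ and recognizing $\ln_{\frac{\alpha\kappa}{1+d\kappa}}Z_\kappa=\frac{1+d\kappa}{\alpha\kappa}\bigl(Z_\kappa^{\frac{\alpha\kappa}{1+d\kappa}}-1\bigr)$ rearranges this into $H_\kappa=\frac{d}{\alpha}+\ln_{\frac{\alpha\kappa}{1+d\kappa}}Z_\kappa$, as claimed. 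To confirm the density-at-the-scale interpretation, I would evaluate $\ln_{\alpha\kappa}f_\kappa^{-\frac{1}{1+d\kappa}}$ at the point where $r^\alpha=d$ --- for instance $x_i=\mu_i+\sigma_{ii}$, which sends each whitened coordinate to unity --- and observe that it returns the same value $\frac{1}{\alpha\kappa}[(1+d\kappa)Z_\kappa^{\frac{\alpha\kappa}{1+d\kappa}}-1]$, so that the entropy is literally the transformed density at the scale. Finally, monotonicity of $g(\kappa)=\frac{\alpha\kappa}{1+d\kappa}$ with $g(0^+)=0$ and $g(\infty)=\frac{\alpha}{d}$ gives the stated range $0<\frac{\alpha\kappa}{1+d\kappa}<\frac{\alpha}{d}$ on the coupled-logarithm index over the heavy-tailed domain $0<\kappa<\infty$.
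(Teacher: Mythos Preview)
Your proposal is correct and follows essentially the same route as the paper: both evaluate $\ln_{\alpha\kappa}f_\kappa^{-\frac{1}{1+d\kappa}}$ in closed form, integrate against the escort to reduce everything to the single identity $\mathrm{E}_{\text{escort}}[r^\alpha]=d$, and then perform the same algebraic regrouping to reach $\frac{d}{\alpha}+\ln_{\frac{\alpha\kappa}{1+d\kappa}}Z_\kappa$. The only cosmetic difference is that the paper packages the first step via the coupled-sum identity $\ln_{\alpha\kappa}f_\kappa^{-\frac{1}{1+d\kappa}}=\bigl(\ln_{\alpha\kappa}Z_\kappa^{\frac{1}{1+d\kappa}}\bigr)\oplus_{\alpha\kappa}\frac{r^\alpha}{\alpha}$ rather than your direct expansion, and treats the escort-moment identity more briefly than your Beta-function argument.
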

\begin{proof}
    Without loss of generality, $\mu$ is set to zero. For $\kappa\neq 0$,
    \begin{align}
        H_\kappa(f_\kappa(\mathbf{x}))&=\bigintssss_\Omega F^{(\frac{1+(d+\alpha)\kappa}{1+d\kappa})}(\mathbf{x})
        \ln_{\alpha\kappa}f^{-\frac{1}{1+d\kappa}}(\mathbf{x})\mathrm{d}\mathbf{x}\\
        \ln_{\alpha\kappa}f^{-\frac{1}{1+d\kappa}}(\mathbf{x})\mathrm{d}\mathbf{x}
        &= \ln_{\alpha\kappa}\left[
        \frac{1}{Z_\kappa}\exp_{\alpha\kappa}^{-(1+d\kappa)}\left(
        \frac{1}{\alpha^\frac{2}{\alpha}}
        \mathbf{x}^\top \Sigma^{-1} \mathbf{x}\right)^\frac{\alpha}{2}
        \right]^{-\frac{1}{1+d\kappa}}
    \end{align}
Utilizing the coupled sum equation \eqref{equ_csum} to separate the partition term, $Z_\kappa^\frac{1}{1+d\kappa},$ from the coupled exponential term, there is the following simplification:
\begin{align}
    \ln_{\alpha\kappa}f^{-\frac{1}{1+d\kappa}}(\mathbf{x})\mathrm{d}\mathbf{x}
    &= \left(\ln_{\alpha\kappa}Z_\kappa^\frac{1}{1+d\kappa}\right)
    \oplus_{\alpha\kappa}\left(\frac{1}{\alpha}\left(
    \mathbf{x}^\top \Sigma^{-1} \mathbf{x}\right)^\frac{\alpha}{2}\right)
\end{align}
\begin{figure*}[ht] 
    \centering
    \subfloat[]{
        \includegraphics[width=0.45\linewidth,page=1]{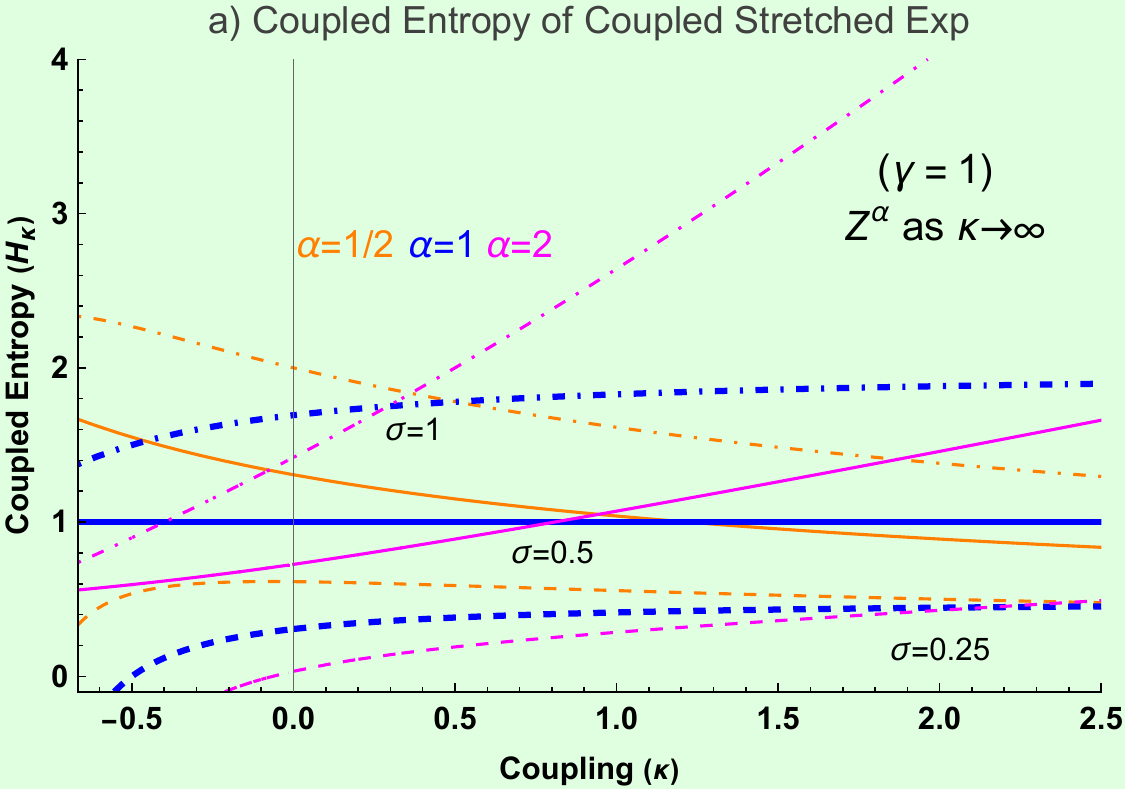}
        \label{fig_CE_CSE_a}
    }%
    \hfill
    \subfloat[]{
        \includegraphics[width=0.45\linewidth,page=1]{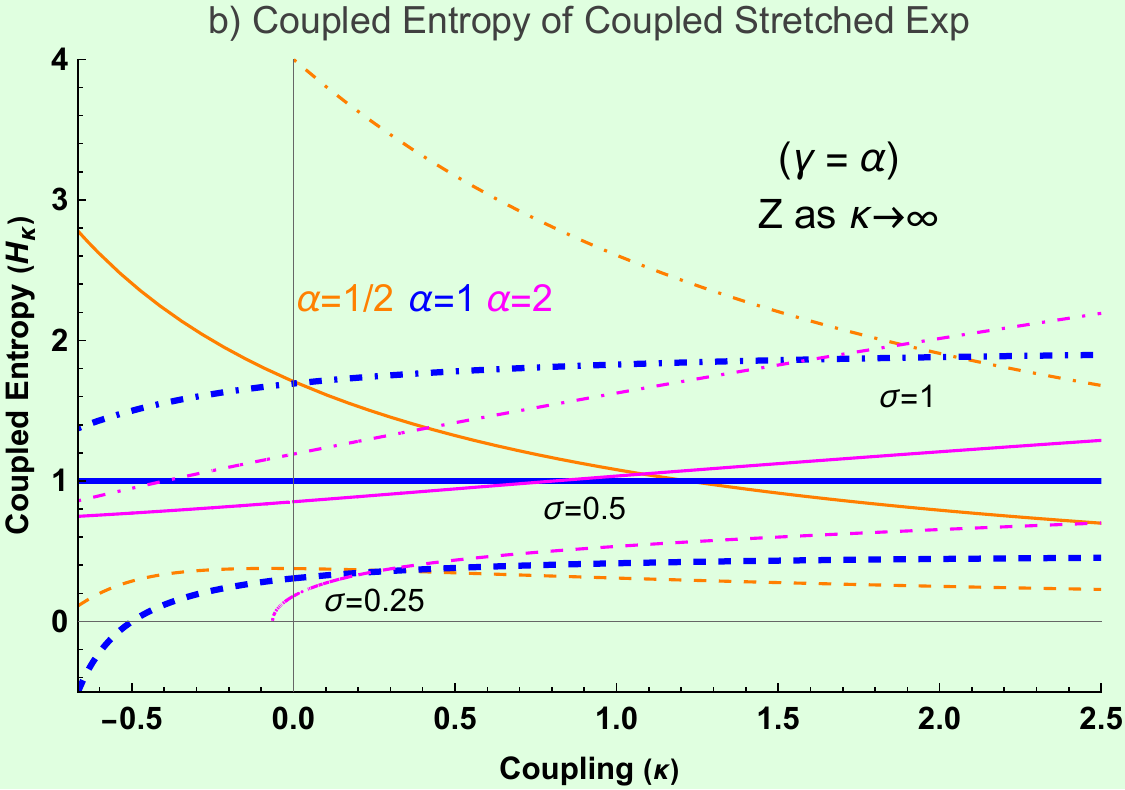}
        \label{fig_CE_CSE_b}
    }
    \caption{The Coupled Entropy of the double-sided Coupled Stretched Exponential Distribution as a function of the coupling $(\kappa)$. a) With $\gamma=1$ the coupled entropy converges to $Z^\alpha$ as $\kappa \rightarrow \infty.$ b) With $\gamma=\alpha$ the coupled entropy converges to $Z$ as $\kappa \rightarrow \infty.$ Each graph shows three values of the variable power, $\alpha$, 0.5 - orange, 1.0 - blue, 1.5 - magenta, and three values of the scale, $\sigma$, 0.25 - dashed, 0.5 - line, 1.0 - dash-dotted. For $\alpha=1$ (blue), a coupled exponential, $Z=\sigma$ and the coupled entropy is only slightly dependent on the coupling. For $\alpha=2$ (magenta), a coupled Gaussian, the normalization has a strong dependence on the coupling, which is amplified by the power 2; thus, taking the root $\gamma=\alpha$ may form a better metric. However, this same root amplifies the metric for $\alpha=\frac{1}{2}$ and low values of $\kappa$.} \label{fig_CE}
\end{figure*}
The coupled logarithm term is integrated over the independent equals distribution $F^{(\frac{1+(d+\alpha)\kappa}{1+d\kappa})}(\mathbf{x})$. The coupled-log of the partition function is a constant with respect to $\mathbf{x}$ and thus unchanged. Each $x_ix_j$ integral is equal to its scale parameter $\sigma_{ij}^\alpha$. Thus, these terms sum to $d^2$, so that the variable term reduces to $\frac{d}{\alpha}$, giving: 
\begin{align}
    \ln_{\alpha\kappa}\left(Z_\kappa^\frac{1}{1+d\kappa}\right)
    \oplus_{\alpha\kappa}\left(\frac{d}{\alpha}\right)
    &=\frac{d}{\alpha} + \left(1+\alpha\kappa\frac{d}{\alpha}\right)\frac{1}{\alpha\kappa}
    \left(Z_\kappa^\frac{\alpha\kappa}{1+d\kappa}-1\right)\\
    &=\frac{d}{\alpha} + \ln_\frac{\alpha\kappa}{1+d\kappa}Z_\kappa(\sigma,\alpha,d).
\end{align}
\end{proof}
\begin{remark}
\begin{enumerate}
    \item The uniqueness of the coupled entropy stems from the fact that its solution for its maximizing distributions is a) equal to the argument of the coupled stretched exponential distribution at $x-\mu=\sigma,$ and b) the definition of the CSED is unique in specifying a scale that is independent of the shape. In summary, the coupled entropy preserves for non-exponential distributions the property of the entropy for exponential distributions that the density of the distribution at the scale is the entropy of the distribution translated into the density via the corresponding generalized logarithm.
    \item In the limit, $\lim\limits_{\kappa\rightarrow\infty} \frac{\alpha\kappa}{1+d\kappa}=\frac{d}{\alpha}Z_\kappa(\sigma,\alpha,d)^\frac{\alpha}{d},$ which suggests that $\gamma=\frac{\alpha}{d}$ may be useful if an asymptotic solution proportional to $Z_\kappa$ is required.
    \item Figure \ref{fig_CE} shows the coupled entropy for the coupled stretched exponential with $d=1.$ The two figures contrast a) $\gamma=1$ and b) $\gamma=\alpha.$ 
    \item The coupled entropy is related to the Tsallis and Normalized Tsallis entropy by factors dependent on the coupling:
    \begin{equation}
        H_\kappa(\mathbf{p}) = \frac{H_{\kappa}^{\text{NT}}(\mathbf{p})}{1+d\kappa}
        = \frac{H_{\kappa}^{\text{T}}
        (\mathbf{p})}{(1+d\kappa)\sum_jp_j^{1+\frac{\alpha\kappa}{1+d\kappa}}}.
    \end{equation}
\end{enumerate}
\end{remark}

\subsection{Composability and Extensivity Properties of the Coupled Entropy}\label{subsecAxioms}

Having established the uniqueness of the coupled entropy based on the symmetry of its solution for the CSEDs, next, I prove two lemmas that specify the composability and extensivity properties of the coupled entropy. It is anticipated that these lemmas could be used to formulate a set of axioms that uniquely specify the coupled entropy. 

In the development of generalized entropies for complex systems, an assumption arose that the generalization of the additivity was directly connected to the requirements for extensive growth of the entropy. This assumption was a natural outcome of the symmetry between the divisive and exponential terms of the $q-$logarithm, $\frac{1}{1-q}(x^{1-q}-1).$ Lemma \ref{lem_CESD} established that this symmetry must be defined for the SF of the CESD in order to define the information scale, which precisely separates the location and asymptotic shapes. The derivatives necessary for the PDF necessarily break this symmetry, and thus, the inverse generalized logarithm required for coupled entropy also breaks that symmetry. Therefore, axioms for the coupled entropy must specify both composability and extensivity requirements. 

Tempesta \cite{tempesta_group_2011} defined and developed with colleagues \cite{rodriguez_new_2019, tempesta_universality_2020} the universal group entropies, which, in part, specify how extensivity requirements can be used to construct an entropy and its composition rule. For the coupled entropy, the composability is a trace component with a nonlinear coefficient of $\alpha\kappa$ and a non-trace component with a power and inverse power of $\gamma$. The extensivity is defined based on the probability of the states growing according to a coupled stretched exponential. This provides a model of both power-law and stretched exponential growth that connects directly with the coupling $\kappa$ and the two stretched exponential terms $\alpha$ and $\gamma$ of the coupled entropy function.

\begin{lemma}[Composability \lowercase{o}f the Coupled Entropy]
Given the coupled entropy as defined in  equation \eqref{equ_CE}, and two dependent subsystems of dimensions $d_{A,B}$ with coupled entropies, $H_\kappa(A;\alpha,\gamma,d_A)$ and $H_\kappa(B|A;\alpha,\gamma,d_B)$ then:
\begin{enumerate}
    \item for $\gamma=1$, the composability of subsystems has a nonlinear term with magnitude of $\alpha \kappa$, where $\alpha$ is the location shape parameter and $\kappa$ is the asymptotic shape and nonlinear coupling. The coupled entropy of the combined system $A \cup B$ is:
\begin{align}
    H_\kappa(A \cup B; \alpha,1,d_{A,B}) &= 
    H_\kappa(A;\alpha,1,d_A) + H_\kappa(B|A;\alpha,1,d_B) \\ \nonumber
    &+  \alpha\kappa H_\kappa(A;\alpha,1,d_A) H_\kappa(B|A;\alpha,1,d_B) \\
    &= H_\kappa(A;\alpha,1,d_A) \oplus_{\alpha\kappa} H_\kappa(B|A;\alpha,1,d_B);
\end{align}
\item for $\kappa=0$, the composability of subsystems requires the exponent $\gamma$. The coupled entropy of the combined system $A \cup B$ is:
    \begin{align} 
    H_\kappa(A \cup B; \alpha,1,d_{A,B}) = 
    \left(H_\kappa(A;\alpha,1,d_A)^\gamma + H_\kappa(B|A;\alpha,1,d_B)^\gamma\right)^\frac{1}{\gamma};
    \end{align}
    \item for $\gamma \neq 1$ and $\kappa \neq 0$, the composability of subsystems for the coupled entropy is

\end{enumerate}
        \begin{align}
            &H_\kappa(A \cup B; \alpha,\gamma,d_{A,B}) = \\ 
             &\left(
             H_\kappa(A;\alpha,\gamma,d_A)^\gamma + 
             H_\kappa(B|A;\alpha,\gamma,d_B)^\gamma + 
             \alpha\kappa H_\kappa(A;\alpha,\gamma,d_A)^\gamma 
            H_\kappa(B|A;\alpha,1,d_B)^\gamma 
            \right)^\frac{1}{\gamma}. \nonumber
        \end{align}
    
\end{lemma}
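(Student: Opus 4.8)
The plan is to prove a single algebraic identity for $\gamma=1$ and then recover all three cases by bookkeeping. Notice first that in \eqref{equ_CE} the outer exponent $\tfrac{1}{\gamma}$ is applied to the entire bracketed functional; writing $\tilde H_\kappa(\,\cdot\,)$ for that bracketed quantity, we have $H_\kappa=\tilde H_\kappa^{1/\gamma}$, equivalently $\tilde H_\kappa=H_\kappa^{\gamma}$. It therefore suffices to establish the $\gamma=1$ composition $\tilde H_\kappa(A\cup B)=\tilde H_\kappa(A)\oplus_{\alpha\kappa}\tilde H_\kappa(B|A)$. Substituting $\tilde H_\kappa(\cdot)=H_\kappa(\cdot)^\gamma$ and expanding the coupled sum reproduces case~3 verbatim; case~2 is the specialization $\kappa=0$, where $\oplus_{\alpha\kappa}$ collapses to ordinary addition because its nonlinear coefficient $\alpha\kappa$ vanishes (and $\tilde H_0=-\int f\ln f$ already obeys the Shannon chain rule); and case~1 is simply $\gamma=1$.

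For the core identity I would use the second, generalized-mean-inside-the-coupled-logarithm form of \eqref{equ_CE}, writing $\tilde H_\kappa(A\cup B)=\ln_{\alpha\kappa}\mathcal M_{AB}$, where $\mathcal M_{AB}$ is the generalized mean obtained by raising the escort-weighted sum to the power $-\tfrac{1}{\alpha\kappa}$, evaluated on the joint law. The essential step is to factor the joint law of the dependent subsystems, $p_{ij}=p^A_i\,p^{B|A}(j|i)$, through this mean so that $\mathcal M_{AB}=\mathcal M_A\cdot\mathcal M_{B|A}$, i.e. the generalized mean of the combined system is the product of the mean of $A$ and the escort-weighted conditional mean of $B$ given $A$. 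Once multiplicativity is in hand, the coupled-logarithm product rule \eqref{equ_csum} with parameter $\alpha\kappa$, namely $\ln_{\alpha\kappa}(xy)=\ln_{\alpha\kappa}(x)\oplus_{\alpha\kappa}\ln_{\alpha\kappa}(y)$, gives $\ln_{\alpha\kappa}\mathcal M_{AB}=\ln_{\alpha\kappa}\mathcal M_A\oplus_{\alpha\kappa}\ln_{\alpha\kappa}\mathcal M_{B|A}=\tilde H_\kappa(A)\oplus_{\alpha\kappa}\tilde H_\kappa(B|A)$, which is exactly the required coupled sum; expanding $\oplus_{\alpha\kappa}$ then exposes the bilinear term with coefficient precisely $\alpha\kappa$.

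The main obstacle is the dimension bookkeeping in the exponents. The combined system carries $d_{A,B}=d_A+d_B$, so its escort exponent $\tfrac{1+(d_{A,B}+\alpha)\kappa}{1+d_{A,B}\kappa}$ and its probability power $-\tfrac{1}{1+d_{A,B}\kappa}$ differ from the corresponding $d_A$- and $d_B$-dependent exponents appearing inside $\tilde H_\kappa(A)$ and $\tilde H_\kappa(B|A)$. The crux of the proof is to verify that, after inserting $p_{ij}=p^A_i\,p^{B|A}(j|i)$ and separating the sums over $i$ and $j$, the independent-equals (escort) normalizations of Lemma \ref{def_IEM} and the $d$-dependent powers recombine so that each factor is genuinely the subsystem coupled entropy computed with its own dimension, while the coefficient of the bilinear term is $\alpha\kappa$ independent of $d$ — consistent with the coupled-sum parameter being $\alpha\kappa$ rather than $\tfrac{\alpha\kappa}{1+d\kappa}$. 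I would handle this by defining the conditional factor $\tilde H_\kappa(B|A)$ as the escort-weighted average over the conditioning states $i$ and checking that the normalization constants telescope; establishing this clean multiplicative separation, rather than the subsequent application of the product rule, is where essentially all of the work lies.
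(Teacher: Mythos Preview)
Your proposal is correct and follows essentially the paper's route: both arguments reduce to the coupled-logarithm product rule $\ln_{\alpha\kappa}(xy)=\ln_{\alpha\kappa}(x)\oplus_{\alpha\kappa}\ln_{\alpha\kappa}(y)$ together with the $\gamma$-power/root bookkeeping you describe. The paper's only tactical variation is that it works pointwise---verifying $\ln_s(p_i^{t_i}q_j^{t_j})=\ln_s(p_i^{t_i})\oplus_s\ln_s(q_j^{t_j})$ with $s=\alpha\kappa$ and \emph{distinct} dimension exponents $t_i=(1+d_i\kappa)^{-1}$, $t_j=(1+d_j\kappa)^{-1}$, after reducing $B|A$ to an independent copy $C$---rather than first aggregating into your mean $\mathcal M$ and establishing multiplicativity; this lets it treat the factoring of the escort-weighted sums as immediate from independence and sidestep the ``telescoping of normalizations'' you flag as the hard step.
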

\begin{proof}
    First, note that $P(A\cup B)=P(A)P(B|A),$ thus for the composition of the entropy, it is sufficient to consider an independent system $P(C)=P(B|A),$ such that $P(A\cup C)=P(A)P(C).$ Since the summations over the states for each system are independent, it is the composition of the generalized logarithm that determines the composition of the generalized entropy. Finally, to highlight the distinction between the divisive and exponent terms of the coupled logarithm, the labels $s=\alpha\kappa$ and $t_{i,j}=\frac{1}{1+d_{i,j}\kappa}$ will be used.
    \begin{enumerate}
        \item The composition by the coupled sum, defined in the Methods equation \eqref{equ_csum}  with coefficient $s$, is shown to hold for the coupled logarithm, $\ln_s x,$ since
    \begin{align}
        \frac{1}{s}\left(\left(p_i^{t_i}q_j^{t_j}\right)^s-1\right) &=
        \frac{1}{s}\left(p_i^{st_i}-1\right) +
        \frac{1}{s}\left(q_j^{st_j}-1\right) +
        s\frac{1}{s}\left(p_i^{st_i}-1\right) 
        \frac{1}{s}\left(q_j^{st_j}-1\right). 
    \end{align}
Carrying through the arithmetic on the right, all of the single probabilities cancel, as does one of the $-\sfrac{1}{s}$ terms. Note that only $s$ is required to be equal for both systems, so strictly speaking $\alpha$ and $\kappa$ could be different; however, this would also constrain the relationship between the coupling and the dimensions. More realistically, this composition requires that $\alpha_i=\alpha_j$ and $\kappa_i=\kappa_j$ but allows $d_i \neq d_j.$
    \item The composition rule for $\gamma$ which is applied outside of the summations (or integrals) of the states, simply accounts for the summation of the entropy of independent systems.  So given $H(A)=H_{BGS}(A)^\frac{1}{\gamma},$ then  $H(A)^\gamma$ is equal to the BGS entropy and is thus additive for independent subsystems. The root $\sfrac{1}{\gamma}$ is then applied to the composed entropy. 
    \item The two compositional rules are combined. First, applying the power $\gamma$ to each subsystem, then applying the non-additive composition based on the coefficient $\alpha\kappa$, then applying the root $\sfrac{1}{\gamma}$ to the composed system.
    \end{enumerate}
\end{proof}
\begin{remark}
    The Suyari uniqueness axioms regarding the Tsallis entropy \cite{suyari_generalization_2004} include a statement regarding a free multiplicative term. What is now established is that the multiplicative term is not a free choice, which motivates consideration of the extensivity as an additional requirement with the composability. 
\end{remark}
\begin{lemma}[Extensivity \lowercase{of the} Coupled Entropy]
    Consider a $d-$dimensional system and require that the coupled entropy function utilizes a matching value of d.
    \begin{enumerate}
        \item For a system with a power-law number of elements, let the equiprobable states have a probability of
        \begin{align*}
            p_i^{-1}=W(N)&=\exp_\frac{1}{\rho}^{1+\frac{d}{\rho}}(\frac{N}{\sigma})\\
            &=\left(1+\frac{N}{\rho\sigma}\right)^{\rho+d}\\
            &\propto N^{\rho+d} \text{ for } N \gg\sigma.
        \end{align*} 
Then, the condition for extensive coupled entropy is $\kappa=\frac{1}{\rho}$ and $\gamma=\alpha=1$.  This condition is satisfied for all scales, $\sigma;$ however, for $N\gg\sigma$ the model simplifies to a pure power-law. 
        \item For a system dependent on a stretched exponential number of elements, let the equiprobable state have a  probability of
        \begin{align*}
            p_i^{-1}=W(N)=\exp{\left(\frac{1}{a}\left(\frac{N}{\sigma}\right)^{a}\right)} 
            \propto \exp\left(N^a\right).
        \end{align*} 
Then, the condition for extensive coupled entropy is $\gamma=a$ , $\kappa=0,$ while $\alpha$ and $d$ are  free parameters.
        \item For a system dependent on a coupled stretched exponential number of elements, let the equiprobable state have a  probability of
        \begin{align*}
            p_i^{-1}=W(N)=\exp_\frac{a}{\rho}^{1+\frac{d}{\rho}}\left(\frac{1}{a}\left(\frac{N}{\sigma}\right)^a\right)
            =\left(1+\frac{1}{\rho}\left(\frac{N}{\sigma}\right)^a\right)
            ^\frac{\rho+d}{a}.
        \end{align*}
Then, the condition for extensive coupled entropy is $\gamma=\alpha=a$ and $\kappa=\frac{1}{\rho}$.  For $N\gg\sigma,$ the growth of states is proportional to $N^{\rho+d}$ , and thus the extensivity condition reduces to $\kappa=\frac{1}{\rho}$. For $N\ll\sigma$, the growth of states is proportional to $\exp\left(N^a\right)$ , and the extensivity condition reduces to $\gamma=a$.
    \end{enumerate}
\end{lemma}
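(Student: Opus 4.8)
The plan is to reduce the entire lemma to one computation—the coupled entropy of an equiprobable distribution over $W$ states—followed by three substitutions of the prescribed $W(N)$. First I would observe that for a uniform distribution, $p_i = 1/W$, the independent-equals (escort) distribution with \emph{any} power $r$ is again uniform, since $p_i^r/\sum_j p_j^r = (1/W)^r/(W(1/W)^r) = 1/W$. This is the key simplification: it eliminates the escort weighting entirely, so only the coupled-logarithm factor survives and the whole expression collapses to a single term.

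Substituting into the $\kappa \neq 0$ branch of Definition of the coupled entropy and using $\ln_{\alpha\kappa} x = \frac{1}{\alpha\kappa}(x^{\alpha\kappa}-1)$, I would obtain the closed form
\begin{equation}
H_\kappa = \left[\ln_{\alpha\kappa} W^{\frac{1}{1+d\kappa}}\right]^{\frac{1}{\gamma}} = \left[\frac{1}{\alpha\kappa}\left(W^{\frac{\alpha\kappa}{1+d\kappa}}-1\right)\right]^{\frac{1}{\gamma}},
\end{equation}
which passes to $[\ln W]^{1/\gamma}$ as $\kappa\to 0$, matching the stated BGS branch. Extensivity is then exactly the requirement that this quantity be linear in $N$ once $W(N)$ is inserted.

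The three cases follow by noting that each prescribed $W(N)$ is (a power of) a coupled exponential engineered to invert the coupled logarithm in the entropy. For Case 1, $\alpha = 1$ and $\kappa = 1/\rho$ give $\frac{\alpha\kappa}{1+d\kappa} = \frac{1}{\rho+d}$, so $W^{\frac{\alpha\kappa}{1+d\kappa}} = 1 + N/(\rho\sigma)$ cancels the exponent $\rho+d$, and applying $\frac{1}{\alpha\kappa}(\,\cdot - 1)$ returns $N/\sigma$; with $\gamma = 1$ this is linear for every $\sigma$, reducing to a pure power law when $N\gg\sigma$. For Case 2 ($\kappa = 0$) the entropy is $[\ln W]^{1/\gamma} = [\frac{1}{a}(N/\sigma)^a]^{1/\gamma}$, which is linear precisely when $\gamma = a$, with $\alpha$ and $d$ absent because the $\kappa=0$ branch is the BGS entropy. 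For Case 3, $\alpha = a$ and $\kappa = 1/\rho$ give $\frac{\alpha\kappa}{1+d\kappa} = \frac{a}{\rho+d}$, collapsing the coupled-exponential exponent $(\rho+d)/a$ to leave $\frac{1}{a}(N/\sigma)^a$, after which $\gamma = a$ linearizes. I would also verify the two asymptotic regimes—$N \gg \sigma$ reducing Case 3 to the power-law condition $\kappa = 1/\rho$, and $N \ll \sigma$ reducing it to $\gamma = a$—to confirm the claimed consistency with Cases 1 and 2.

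There is no deep obstacle here; the content is verification rather than construction. The one point requiring care, and the only place an error could hide, is the bookkeeping of the several distinct coupling exponents: the divisive $\alpha\kappa$ inside $\ln_{\alpha\kappa}$, the probability exponent $\frac{1}{1+d\kappa}$, and the independent-equals power $\frac{1+(d+\alpha)\kappa}{1+d\kappa}$. The proof hinges on these combining into the single effective exponent $\frac{\alpha\kappa}{1+d\kappa}$, which is exactly the reciprocal of the exponent built into the coupled-exponential form of $W(N)$; making that cancellation explicit is the crux, and the remaining work is simply matching $\gamma$ to the residual power of $N$.
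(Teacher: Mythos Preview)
Your proposal is correct and follows essentially the same route as the paper: reduce the coupled entropy of an equiprobable distribution to $\bigl[\ln_{\alpha\kappa} W^{1/(1+d\kappa)}\bigr]^{1/\gamma}$ and then substitute each prescribed $W(N)$, matching exponents to force linearity in $N$. The only difference is that you make explicit the reason the escort weighting drops out for uniform $p_i$, a step the paper silently assumes before writing down the same closed form and performing the same three substitutions.
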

\begin{proof}
    \begin{enumerate}
        \item Given $W(N)=\exp_\frac{1}{\rho}^{1+\frac{d}{\rho}}(\frac{N}{\sigma})$ for the scaling of microstates, the growth of the coupled entropy is
        \begin{align}
            H_\kappa(p;\alpha,\gamma,d)&=\left(\ln_{\alpha\kappa}\left(\left(\exp_\frac{1}{\rho}^{1+\frac{d}{\rho}}(\frac{N}{\sigma})\right)^{\frac{1}{1+d\kappa}}\right)\right)^\frac{1}{\gamma} 
            =\left(\frac{1}{\alpha\kappa}\left(\left(1+\frac{N}{\rho\sigma}\right)^\frac{(\rho+d)\alpha\kappa}{1+d\kappa}-1\right)\right)^\frac{1}{\gamma} \\
        \end{align}
The inner exponent establishes a requirement that $\rho+d=\alpha(\frac{1}{\kappa}+d).$ Thus extensivity requires $\alpha=\gamma=1$ and $\rho=\frac{1}{\kappa}.$ The coupled entropy is then $H_\frac{1}{\rho}(p,1,1)=\frac{N}{\sigma}.$ The scale $\sigma,$ establishes a transition between exponential growth, $N\ll\sigma$ and power-law growth $N\gg\sigma$.
        \item Given $W(N)=\exp(\frac{1}{\alpha}\left(\frac{N}{\sigma}\right)^\alpha)$ for the scaling of microstates, the growth of the coupled entropy with $\kappa=0$ is 
        \begin{align}
            H_0(p;\alpha,\gamma)&=\left(\ln\exp \left(\frac{1}{a}\left(\frac{N}{\sigma}\right)^a\right)\right)^\frac{1}{\gamma} 
            =a^{-\frac{1}{\gamma}}+\left(\frac{N}{\sigma}\right)^\frac{a}{\gamma}.
        \end{align}
        Thus, extensive growth requires $\gamma=a$ but does not specify a requirement for $\alpha$ and $d.$
        \item Given $W(N)=\exp_\frac{1}{\rho}^{1+\frac{d}{\rho}}\left(\frac{1}{a}\left(\frac{N}{\sigma}\right)^a\right)$ for the scaling of the microstates, the growth of the coupled entropy is
        \begin{align*}
            H_\kappa(p;\alpha,\gamma,d)&=\left(\ln_{\alpha\kappa}\left(\left(
            \exp_\frac{1}{\rho}^{1+\frac{d}{\rho}}\left(\frac{1}{a}
            \left(\frac{N}{\sigma}\right)^a\right)\right)^
            {\frac{1}{1+d\kappa}}\right)\right)^\frac{1}{\gamma} \\
            &=\left(\frac{1}{\alpha\kappa}\left(
            \left(1+\frac{1}{\rho}\left(\frac{N}{\sigma}\right)^a\right)
            ^{\frac{\rho+d}{a}\frac{\alpha\kappa}{1+d\kappa}}-1\right)\right)^\frac{1}{\gamma}.
        \end{align*}
        Thus, for extensive growth, the internal exponent requires that $\alpha=a$ and $\rho=\frac{1}{\kappa}$. Simplifying the inner terms, $H_\kappa(p;\alpha,\gamma,d)=\left(\frac{1}{a}\left(\frac{N}{\sigma}\right)^a\right)^\frac{1}{\gamma}$, thus extensive growth requires that $\gamma=\alpha.$
    \end{enumerate}
\end{proof}
This completes the proof specifying the coupled entropy as a unique solution for the measurement of uncertainty in complex systems. Future research is recommended to formulate these proofs as a set of axioms that specify the coupled entropy; however, as explained earlier, it is the solution rather than a set of axioms that is critical. The discussion of applications will demonstrate why this solution is significant.

\section{Discussion of Applications}\label{secDiscuss}
Having established the uniqueness of the Coupled Entropy, a plethora of applications will benefit from the improved analytical precision of its solutions. I'll begin by comparing the coupling parameter with other proposed measures of statistical complexity. I'll then prove that the coupled thermostatistics defines the generalized temperature as the scale of the coupled exponential distribution and is thus independent of the nonlinearity. Next, I review the two applications of the coupled entropy framework to infodynamics, including the modeling of extreme heavy-tailed distributions with the coupled variational inference method and the analysis of communications systems with heavy-tailed interference.

\subsection{A Measure of Statistical Complexity}\label{subsecModels}
A consensus regarding a precise definition of statistical complexity has not been established, though there is agreement that purely ordered and disordered states do not have complex structure. There are two aspects of the coupling parameter that suggest it could be a measure of statistical complexity:
\begin{enumerate}
    \item The multiplicative noise model consists of an ordered function and two purely disordered sources of white noise. The addition of these components is still a non-complex system; however, the multiplicative noise component is mediated by the coupling $\kappa,$ and thus this coefficient could measure the degree of complexity.
    \item The BGS measure of information for the coupled exponential distribution is $1+\ln\sigma+\kappa.$ The non-complex information could be associated with the exponential distribution, $1+\ln\sigma$, thereby suggesting that $\kappa$ would be the complex information.
\end{enumerate}

To examine the role of the coupling in measuring complexity more systematically, let's consider a measure based on a comparison of the entropy of a complex distribution with its non-complex equilibrium distribution proposed by Shiner, Davison, and Landsberg (SDL) \cite{shiner_simple_1999}. The López-Ruiz, Mancini, and Calbet (LMC) \cite{ruiz_statistical_2013, martin_generalized_2006, rudnicki_monotone_2016} considers a broader range of divergences but reduces to the SDL function given the Kullback-Leibler divergence. These measures of complexity seek to account for the distance between order and disorder via the multiplication of complementary ratios with respect to the maximum or equilibrium entropy $H_e$,
\begin{align}
    C^{SDL}
    = \frac{H}{H_e}
    \left(1-\frac{H}{H_e}\right).
\end{align}
Take the exponential distribution and the Gaussian distribution to be the maximum entropy equilibrium distributions. If we consider the coupled exponential (ce) and coupled Gaussian (cg) distributions in which the scale is restricted to satisfy the respective constraint of the equilibrium (e) mean or standard deviation, then the scales of these distributions are $\sigma_{ce} = \sigma_{e}(1-\kappa)$ and $\sigma_{cg} = \sigma_{e}\sqrt{1-2\kappa}$, respectively. Note that this  restricts $\kappa<1$ and $\kappa<\sfrac{1}{2},$ respectively. For the coupled exponential case, the complexity measure is:
\begin{align}
    C_{ce}^{SDL}
    &= \left(
    \frac{1+\ln \sigma_{ce}+\kappa}{1+\ln \sigma_{e}}
    \right)
    \left(
    1 - \frac{1+\ln \sigma_{ce}+\kappa}{1+\ln \sigma_{e}}
    \right) \\
    &=\left(
    \frac{1+\ln (\sigma_{e}(1-\kappa))+\kappa}{1+\ln \sigma_{e}}
    \right)
    \left(
    1 - \frac{1+\ln (\sigma_{e}(1-\kappa))+\kappa}{1+\ln \sigma_{e}}
    \right) \\
    &\approx \frac{\kappa}{1+\ln \sigma_e} + O(\kappa^2)
\end{align}
The entropy of the coupled Gaussian to first order is $H_{cg}=\frac{1}{2}+\frac{1}{2}\ln 2\pi \sigma_{cg}+\kappa +O(\kappa^2)$. The complexity of the coupled Gaussian relative to the Gaussian is thus: 
\begin{align}
    C_{cg}^{SDL}&=\left(
        \frac{\frac{1}{2}+\frac{1}{2}\ln2\pi\sigma_e\sqrt{1-2\kappa}+\kappa}
        {\frac{1}{2}+\frac{1}{2}\ln2\pi\sigma_e}
    \right)
    \left(1-
        \frac{\frac{1}{2}+\frac{1}{2}\ln2\pi\sigma_e\sqrt{1-2\kappa}+\kappa}
        {\frac{1}{2}+\frac{1}{2}\ln2\pi\sigma_e}
    \right) \\
    &\approx \frac{\kappa}{1+\ln(2\pi\sigma_e)}+O(\kappa^2)
\end{align}

These computations confirm that, to first order, the nonlinear statistical coupling is a measure of the statistical complexity, with the refinement that dividing by a measure of the equilibrium entropy improves the metric. 

\subsection{Thermodynamics of Complex Systems}\label{subsec_thermo}

An essential test for the utility of a generalized entropy function is its ability to frame a comprehensive generalization of thermodynamics. The generalization of thermodynamics \cite{naudts_generalized_2004, naudts_generalised_2011} for complex systems is a the principal objective of the nonextensive statistical physics initiative. However, despite significant progress \cite{curado_general_1999, tsallis_introduction_2004, ferri_equivalence_2005,nobre_effective-temperature_2012} there remain difficulties in defining a consistent generalized temperature which supports a full thermodynamic framework.  I will proof that the coupled entropy enables two complementary definitions of temperature based on either the energy constraint or the inverse of the energy Lagrangian multiple. From these definitions consistent expressions for the free energy can be defined using either the coupled algebra \ref{subsec_CPA} or the standard algebra, respectively.

Consider a canonical ensemble with an infinite number of states and continuous energy levels $\epsilon,$ which is characterized by a NESS due to nonlinear drift-diffusion, superstatistics, or long-range correlations characterized by the degree of the nonlinear statistical coupling with either heavy-tailed $(\kappa>0)$ or compact-support $(-1<\kappa<0)$ distributions. For simplicity, the Boltzmann constant is set to one ${(k=1)}.$ Then, the generalization of the Boltzmann-Gibbs distribution is the coupled exponential distribution, and its associated coupled entropy is:
\begin{align}
    f(\epsilon)&=\frac{1}{\sigma}\exp_\kappa^{-(1+\kappa)} 
    \left(\frac{\epsilon}{\sigma}\right) \\
    S_\kappa(f(\epsilon)) & = 1 + \ln_\frac{\kappa}{1+\kappa} \sigma 
    = \frac{1+\kappa}{\kappa} \sigma^\frac{\kappa}{1+\kappa} - \frac{1}{\kappa}.
\end{align}
For this system, the internal energy of the system, $U_\kappa=\int_0^\infty \epsilon f/^\frac{1+2\kappa}{1+\kappa}(\epsilon)\mathrm{d}\epsilon = \sigma,$ and the partition function, $Z_\kappa= \int_0^\infty \exp_\kappa^{-(1+\kappa)} 
    \left(\frac{\epsilon}{\sigma}\right)\mathrm{d}\epsilon = \sigma,$ are both equal to the informational scale. With the Boltzmann-Gibbs equilibrium $(\kappa=0)$ the temperature is  equal to both this constraint and to the inverse of the associated Lagrangian multiplier. For the NESS, a consistent set of thermodynamic relationships is satifies with two complementary approaches.  First, the generalized temperature is defined by the energy constraint, $T_\kappa^C=\sigma$ and the thermodynamic equations utilize the coupled pseudo-algebra. Second, the nonlinearity is incorporated into the generalized temperature via the inverse of the Lagrange multiplier, $T_\kappa^M=\sigma^\frac{1}{1+\kappa},$ and the thermodynamic relationships utilize the standard alegbra.  The Lagrangian maximization of the coupled entropy and the value of its multipliers are proven in the Methods section \ref{sec_maxCE}. 

In equilibrium thermodynamics, the temperature is defined by the derivative of the entropy with respect to the energy. In the NESS system with $\kappa$ coupling this relationship defines the temperature to be equal to the Lagrange multiplier 
\begin{align}
    \frac{1}{T_\kappa^M} &\equiv \frac{\partial S_\kappa}{\partial U_\kappa} 
    \nonumber \\
    &= \frac{\partial (1 + \ln_\frac{\kappa}{1+\kappa} \sigma)}{\partial \sigma}
    \nonumber \\
    &= \sigma^{\frac{\kappa}{1+\kappa}-1} = \sigma^{-\frac{1}{1+\kappa}}.
\end{align}
I'll return to this model in a moment but of higher interest is the fact that by accounting for the role of the independent equals moment, a complimentary and consistent model emerges in which the generalized temperature is independent of the coupling. In this case, the derivative is with respect to a scaled value of the energy, $V_\kappa = \frac{1}{2+\kappa} U_\kappa^{2+\kappa},$ reflecting the modified relationship between entropy and energy for extensive growth.  The constraint based temperature is then,
\begin{align}
    \frac{1}{T_\kappa^C} &\equiv \frac{\partial S_\kappa}{\partial V_\kappa}
    = \frac{\partial S_\kappa}{\partial U_\kappa}
    \frac{\partial U_\kappa}{\partial V_\kappa}
    \nonumber \\
    &= \sigma^{-\frac{1}{1+\kappa}} \sigma^{1+\kappa} = \frac{1}{\sigma}.
\end{align}

In equilibrium, the free energy satisfies two expressions, $F=U-TS=-T\ln Z.$ Thus, the entropy can be expressed as $S= \frac{U}{T}+\ln Z.$ Given the uniqueness requirement \ref{def_required} regarding the structure of the coupled entropy solution, this suggests that the coupled sum and difference should be used for the generalized thermodynamic expressions. And indeed, this expression is consistent for $T_\kappa^C=\sigma,$ 
\begin{align}
    S_\kappa &\equiv \frac{U_\kappa}{T_\kappa^C}
    \oplus_\kappa\ln_\kappa Z^\frac{1}{1+\kappa}  \\
    &= 1 \oplus_\kappa\ln_\kappa \sigma^\frac{1}{1+\kappa} \\
    &= 1+ \ln_\frac{\kappa}{1+\kappa} \sigma \nonumber.
\end{align}
With this expression for the generalization of thermodynamics, the two free energy definitions are consistent,
\begin{align}
    F_\kappa &\equiv U_\kappa \ominus_\frac{\kappa}{T_\kappa^C} 
    T_\kappa S_\kappa 
    \equiv \ominus_\frac{\kappa}{T_\kappa^C} 
    T_\kappa^C \ln_\kappa Z^\frac{1}{1+\kappa}\\
    &= \sigma \ominus_\frac{\kappa}{\sigma} \sigma
    (1 \oplus_\kappa \ln_\kappa \sigma^\frac{1}{1+\kappa}) 
    \nonumber \\
    &=0 \ominus_\frac{\kappa}{\sigma} 
    \sigma \ln_\kappa \sigma^\frac{1}{1+\kappa} 
    \rightarrow \ominus_\frac{\kappa}{T_\kappa^C} 
    T_\kappa^C \ln_\kappa Z^\frac{1}{1+\kappa}. \checkmark
    \nonumber \\
\end{align}
This generalization of the free energy is consistent with treating the probabilities associated with the energy and the entropy of the system as independent, thereby requiring nonlinear combinations of their information.  That is, taking the coupled exponential of the free energy divided by temperature gives the following expression,
\begin{align}
    \exp_\kappa^{-(1+\kappa)} \frac{F_\kappa}{T_\kappa^C} 
    = \exp_\kappa^{-(1+\kappa)}
    \left(\frac{U_\kappa}{T_\kappa^C} \ominus_\kappa 
     S_\kappa \right) \nonumber \\
    = \frac{\exp_\kappa^{-(1+\kappa)}\frac{U_\kappa}{T_\kappa^C}}
    {\exp_\kappa^{-(1+\kappa)}S_\kappa}.
\end{align}

While the model with a generalized temperature independent of the nonlinearity is attractive, a complimentary model with the functions independent of the nonlinearity is also possible.  In this case, the moment temperature is used,
\begin{align}
    S_\kappa &= \frac{U_\kappa}{T_\kappa^M} + \ln_\kappa Z^\frac{1}{1+\kappa}
    \nonumber \\
    &= \sigma^{1-\frac{1}{1+\kappa}} 
    + \ln_\kappa \sigma^\frac{1}{1+\kappa} \nonumber \\
    &= \sigma^\frac{\kappa}{1+\kappa} 
    + \frac{1}{\kappa}(\sigma^\frac{\kappa}{1+\kappa} -1) \\
    &= \frac{1+\kappa}{\kappa}\sigma^\frac{\kappa}{1+\kappa} - \frac{1}{\kappa} \\
    &= 1 + \ln_\frac{\kappa}{1+\kappa} \sigma.
\end{align}
And the two expressions for the free energy are equal,
\begin{align}
    \text{First Expression:} \nonumber \\
    F_\kappa &= U_\kappa-T_\kappa S_\kappa \nonumber \\
    &= \sigma - \sigma^\frac{1}{1+\kappa}
    (1+\ln_\frac{\kappa}{1+\kappa} \sigma) \nonumber \\
    &= \sigma + \frac{1}{\kappa}\sigma^\frac{1}{1+\kappa}
    - \frac{1+\kappa}{\kappa} \sigma \nonumber \\
    &= \frac{1}{\kappa}(\sigma^\frac{1}{1+\kappa} - \sigma); \\
    \text{Second Expression:} \nonumber \\
    F_\kappa &= - T_\kappa \ln_\kappa Z_\kappa^\frac{1}{1+\kappa}
    \nonumber \\
    &= - \sigma^\frac{1}{1+\kappa} 
    \ln_\kappa \sigma^\frac{1}{1+\kappa} \nonumber \\
    &= \frac{1}{\kappa}(\sigma^\frac{1}{1+\kappa} - \sigma).
\end{align} 
So, the coupled entropy framework achieves two complementary, consistent models for defining a generalized temperature.

Physical evidence for the significance of this framework comes from the research of J. Cleymans, D. Worku, and colleagues \cite{cleymans_tsallis_2012,cleymans_systematic_2013} in their thermodynamic model of high-energy particle collisions. The momentum distribution of these collisions exhibits power-law behavior that can be accurately fitted to a generalized Pareto distribution over several log-scale decades. Although they cite the Tsallis $q-$statistics framework, they document the inability to formulate the thermodynamic equations using the $q-$exponential distribution. Instead, they correct the distribution by raising it to the power $q,$ which has the effect of redefining the variable, call it $q^{CM},$ and aligning the distribution with the requirements specified under the nonlinear statistical coupling framework, 
\begin{align}
    f(y) = \frac{1}{Z}\left(1 + (q^{CW}-1)\frac{g(y)}{T}\right)^\frac{q^{CW}}{1-q^{CW}}.
\end{align}
The particle physics details are not included in order to focus on the structure of the distribution. This distribution is now in the form originally specified by the generalized Pareto distribution, first recommended by this author in the preprint \cite{nelson_relationship_2008}, and proven here to be required to assure the temperature, $T,$ is equal to the informational scale. This can be seen via the following relationships for $\alpha=d=1,$
\begin{align}
    q^{CW}-1&=\frac{q-1}{2-q}=\kappa \\
    \frac{q^{CW}}{1-q^{CW}}&=\frac{1}{1-q}=-\left(\frac{1}{\kappa}+1\right).
\end{align}
Thus, the multiplicative factor and the exponent factor now have the required relation to each other. But of course, this modified distribution leads to a modified definition for the generalized logarithm and entropy. Therefore, the physical evidence from modeling high energy particle collisions provides support for the need to utilize the coupled entropy and the nonlinear statistical coupling framework to achieve consistent thermodynamic relationships. There is much additional research required in this area, since these relationships are just a first step in evaluating the hypothesis that the laws of thermodynamics can be generalized for complex systems.

\subsection{Complex Infodynamics}
A consistent generalization of thermodynamics is crucial to establishing a coherent model of infodynamics for complex systems, given the intimate connection between the two domains. In this section, I'll demonstrate how the coupled entropy has advanced the design of variational inference for extreme environments and I'll outline how similar impacts can be realized in the design of congested communication systems.

Within neuroscience, the predictive coding model \cite{friston_predictive_2009, smith_recent_2021} is a leading candidate for explaining the efficiency and effectiveness of biological intelligence. The model is based on local updates at each neuron in which error signals from lower layers are compared with predictions from higher layers, with the difference generating new error and prediction signals. Nevertheless, within artificial intelligence, global signaling via backpropagation \cite{millidge_predictive_2022} dominates the training of deep learning \cite{smith_recent_2021, rosenbaum_relationship_2022} algorithms. Variational inference \cite{blei_variational_2017} , in which probabilistic models are learned and utilized for generative and discriminative capabilities, may be a bridge between these two paradigms. Variational inference, including predictive coding, turns the intractable problem of learning an unknown posterior distribution into an optimization problem of approximating that distribution given a parameterized family of distributions. The negative of the evidence lower bound of the optimization is equivalent to the informational free energy and consists of a negative log-likelihood, which is the error signal between a generated and original dataset, and the divergence between the posterior and prior latent distribution, which is a regulator maintaining model simplicity.

Expanding the scope of variational inference to non-exponential distributions could be crucial to improving neurological models and accelerating the capability of artificial systems. Thus, variational inference methods that leverage non-exponential information are an active area of research \cite{goertzel_actpc-geom_2025}. Such methods could be deployed in the heavy-tailed domain to improve inference robustness \cite{cao_coupled_2022} or in the compact-support domain \cite{martins_nonextensive_2009} to improve efficiency via spareness. Efforts to utilize the Rényi \cite{li_renyi_2016}, Tsallis \cite{kobayashis_q-vae_2020}, and an initial implementation with the coupled entropy, were limited to the domain of finite variance $(\kappa<\sfrac{1}{2})$ to ensure convergence of the training. Recently, the Coupled Variational Autoencoder (CVAE) \cite{nelson_variational_2025} was designed to draw samples from the independent-equals modification to the latent coupled Gaussian distribution, thereby guaranteeing that even the most extreme heavy-tailed distributions with $\kappa \gg 1$ could be trained. This is because the coupling and scale of the latent distribution are transformed by $\frac{\kappa}{1+2\kappa}$ and $\frac{\sigma}{1+2\kappa},$ respectively. This design demonstrated improvements of $10\%$ and $25\%$ in measures of Learned Perceptual Image Patch Similarity and Multiscale Structural Similarity, respectively, across  coupling values ranging from $10^{-5} \text{ to } 10^5.$ The code base for the nonlinear statistical coupling methods and the CVAE algorithm is referenced in Appendix \ref{app_github}.

An open question about the design of the coupled variational inference algorithms is whether to use addition or coupled addition in combining the accuracy and regularization components. The reported results utilized addition; however, the thermodynamic computations clarified a requirement to utilize the coupled sum to combine the energy and temperature times entropy. Related to this question is whether the coupled divergence function should utilize a $f-$divergence style formulation in which the generalized logarithm is applied to the division of the probabilities or a Bregman style formulation in which the coupled logarithm of each probability is subtracted. The distinction is related to whether the probability distributions are independent, while the information is non-additive ($f-$divergence), or the distributions are dependent but their information is additive (Bregman). Because of the subtleties of these distinctions, the information geometric analysis of the coupled divergences is deferred for a separate investigation \cite{amari_information_2016,nielsen_elementary_2020}.

In parallel with the complexity of AI systems, modern communications systems, such as 5G wireless and cognitive radio networks, are facing increased complexity as dense traffic creates heavy-tailed interference. In contrast to a Gaussian noise channel, interference fluctuates in intensity depending on the instantaneous use of the channel. For example, Clavier, et al. measured \cite{clavier_experimental_2021} tails shapes above $\kappa>1,$ in 5G networks, indicative of quite extreme fluctuations. In cognitive radio designs \cite{munoz_renyi_2020} secondary users much adaptively switch to open channels in order to guarantee non-interference with primary users. These applications cannot use standard metrics such as the mean-square error or the Shannon entropy, since the outliers invalidate the metrics foundational assumptions.

The Rényi entropy has found wide use in such applications \cite{principe_information_2010, hild_analysis_2006, erdogmus_adaptive_2004}. The probability analysis of entropy function presented in Section \ref{subsec_required} provides insight regarding why the generalized entropies are effective in these circumstances. A high-interference channel characterized by $1<\kappa<2, $ such as the distributions shown in Figure \ref{fig_entropies}, has a Shannon entropy measure dominated by the tail. The Rényi entropy dramatically reduces this influence, if its index matches this tail shape via $q=1+\frac{\alpha\kappa}{1+d\kappa}.$\footnote{As explained in Appendix \ref{app_Entropies} $q$ is used for the Rényi index, since $\alpha$ is being used for the stretching parameter.} The coupled entropy framework is expected to further improve communication design in the presence of heavy-tailed noise in two ways. First, in the Preliminary section \ref{subsec_noise} the connection between multiplicative noise, heavy-tailed tail shape, and the index of the coupled entropy were made explicit, which will make the design choices more explicit. Second, generalized entropy requirement reviewed in the Main Results section \ref{subsecEntropies} show that while the Rényi partially removes the influence of the tail shape, the full discounting of this influence requires the coupled entropy.

\section{Method: Maximization of the Coupled Entropy}\label{sec_maxCE}
A core methodology of infodynamics is the derivation of the distribution that maximizes an entropy function using the Lagrangian function. In this section, I prove that the coupled stretched exponential distributions maximize the coupled entropy. A foundational weakness in the development of the Tsallis entropy was the assumption that given a hypothesized entropy, the maximizing distribution is derived and thus to be accepted without validation. As I established in the \ref{subsecEntropies}, in fact, the definition of the shape-scale distributions was established correctly in the 1800s, and the \textit{q-}exponentials have a weaker justification. Thus, a stronger requirement for a generalized entropy is that a) the shape parameter and thus the nonlinearity or complexity defines the generalization, and b) that the maximizing distributions are the shape-scale distributions given the independent-equals constraints. 

The derivation for one dimension is completed here, though the multivariate distributions is also recommended for future research. An additional simplification is the use of just two constraints, the normalization and the $\alpha$ moment with $\mu=0$. The coupled entropy is defined using the independent equals distribution with the power $\frac{\alpha\kappa}{1+\kappa},$ which explicitly facilitates the computation of the $\alpha$ moment. The computation and thus the maximization of the coupled entropy with multiple constraints and statistics is an important investigation to complete. This is recommended as a part of a broader research program on the information geometry of the coupled exponential family.

\begin{lemma}[Coupled Entropy Maximized by the Coupled Exponential Distribution]
    Given the coupled entropy 
    \begin{equation} H_\kappa(f(x)) = \int_{x\in\mathcal{X}}  
        f/^{\left(\frac{1+(1+\alpha)\kappa}{1+\kappa}\right)}(x)\ln_{\alpha\kappa}\left(f(x)\right)^{-\frac{1}{1+\kappa}}
    \end{equation} 
and  two constraints, the normalization and  the independent equals moment, $\mu_1^{(1+\frac{\alpha\kappa}{1+\kappa})}=\sigma^\alpha$;
then, the maximum coupled entropy distribution is the coupled stretched exponential of equation \eqref{equ_csePDF}, 
\begin{align}
f_\kappa(x;\alpha) &= \frac{1}{Z_\kappa(\sigma,\alpha)}\left(1+\kappa\frac{x^\alpha}{\sigma^\alpha}\right)^{-\frac{1+\kappa}{\alpha\kappa}}
=\exp_{\alpha\kappa}^{-(1+\kappa)}
\left(\frac{x^\alpha}{\alpha \sigma^\alpha}
\oplus_{\alpha\kappa} \ln_{\alpha\kappa} Z^\frac{1}{1+\kappa}
\right).
\end{align}
The Lagrangian multiples are
\begin{align}
    \begin{matrix}
        \text{Normalization:} & \lambda_0 =  
        - \frac{1} {(1+\kappa) 
    \int_{x\in\mathcal{X}} 
    f(x)^{1+\frac{\alpha\kappa}{1+\kappa}} dF(x)}; \\
    \text{Indpendent Equals Moment:} & \lambda_1 =
    \frac{Z_\kappa^\frac{\alpha\kappa}{1+\kappa} }
     {\alpha \sigma^\alpha}.
    \end{matrix}
\end{align}
\end{lemma}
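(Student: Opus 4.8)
The plan is to run a constrained variational (Euler--Lagrange) argument: I would solve the stationarity condition for $f$, show that it is forced into the coupled stretched exponential form of \eqref{equ_csePDF}, and then read off the two multipliers by matching coefficients. As a first step I would simplify the entropy functional \eqref{equ_CE}. Writing $p\equiv 1+\frac{\alpha\kappa}{1+\kappa}$ for the independent-equals power and expanding the coupled logarithm as $\ln_{\alpha\kappa}f^{-\frac{1}{1+\kappa}}=\frac{1}{\alpha\kappa}\bigl(f^{-(p-1)}-1\bigr)$, the escort weight $f/^{p}=f^{p}/N$ (with $N\equiv\int f^{p}\,dF$) multiplies $f^{-(p-1)}$ to leave $f/N$, so that
\[ H_\kappa=\frac{1}{\alpha\kappa}\Bigl(\frac{I}{N}-1\Bigr),\qquad I\equiv\int f\,dF. \]
The decisive bookkeeping point is to treat $I$ and $N$ as live functionals of $f$ and \emph{not} to impose normalization before differentiating.

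Next I would form the Lagrangian $\mathcal{L}=H_\kappa-\lambda_0\,(I-1)-\lambda_1\bigl(\tfrac{1}{N}\int x^{\alpha}f^{p}\,dF-\sigma^{\alpha}\bigr)$ and set $\delta\mathcal{L}/\delta f(y)=0$. Using $\delta N/\delta f(y)=p\,f(y)^{p-1}$ and the quotient rule on the escort moment --- which, after invoking the moment constraint at the extremum, contributes the factor $(y^{\alpha}-\sigma^{\alpha})$ --- the stationarity equation solves to
\[ f(y)^{p-1}=\frac{C}{1+\lambda_1\alpha\kappa N\,(y^{\alpha}-\sigma^{\alpha})} \]
for a constant $C$. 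Separating the constant and $y^{\alpha}$ parts of the denominator shows $f^{p-1}\propto\bigl(1+\kappa\,y^{\alpha}/\sigma^{\alpha}\bigr)^{-1}$, hence $f\propto\bigl(1+\kappa\,y^{\alpha}/\sigma^{\alpha}\bigr)^{-\frac{1+\kappa}{\alpha\kappa}}$, which is exactly the CSED, since $-\frac{1}{p-1}=-\frac{1+\kappa}{\alpha\kappa}$.

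Finally I would fix the constants. Matching the coefficient of $y^{\alpha}$ forces $\lambda_1=\frac{1}{\alpha N\sigma^{\alpha}(1+\kappa)}$, and matching the amplitude (retaining the $\frac{1}{\alpha\kappa N}$ term in $\delta H_\kappa/\delta f$) gives $\lambda_0=\frac{1-p}{\alpha\kappa N}=-\frac{1}{(1+\kappa)N}$, which is the stated normalization multiplier. To rewrite $\lambda_1$ as $Z_\kappa^{\alpha\kappa/(1+\kappa)}/(\alpha\sigma^{\alpha})$ I would use the Beta-function identity $\int f^{p}\,dF\big/\!\int f\,dF=1/(1+\kappa)$ --- equivalently the escort rescaling $\kappa\mapsto\kappa/(1+\alpha\kappa)$, $\sigma^{\alpha}\mapsto\sigma^{\alpha}/(1+\alpha\kappa)$ established in Lemma \ref{def_IEM} --- which gives $N=Z_\kappa^{-\alpha\kappa/(1+\kappa)}/(1+\kappa)$ and closes the identification.

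The hard part is the self-referential, nonlocal dependence on $N=\int f^{p}$, which sits inside both the objective and the moment constraint; the advertised multipliers emerge only if normalization is left unimposed during the variation and the escort-moment variation is handled with the full quotient rule, since a premature substitution $I=1$ corrupts $\lambda_0$. A secondary task is to confirm that the stationary point is a genuine maximum --- concavity of $H_\kappa$ on the admissible parameter range --- and that the escort $\alpha$-moment entering the constraint is finite, which Lemma \ref{def_IEM} guarantees for all finite $\kappa$.
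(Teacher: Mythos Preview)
Your approach is essentially the paper's Euler--Lagrange argument, but with two genuine (and helpful) streamlinings. First, you collapse the entropy to the closed form $H_\kappa=\frac{1}{\alpha\kappa}(I/N-1)$ before differentiating, whereas the paper keeps the split $H_\kappa=A/B$ with $A=\int f^{p}\ln_{\alpha\kappa}f^{-1/(1+\kappa)}\,dF$ and differentiates the product inside $A$; the two variations agree term by term once you use $p/(\alpha\kappa)=1/(\alpha\kappa)+1/(1+\kappa)$, so nothing is lost. Second, to convert $\lambda_1=1/\bigl(\alpha N\sigma^{\alpha}(1+\kappa)\bigr)$ into the stated $Z_\kappa^{\alpha\kappa/(1+\kappa)}/(\alpha\sigma^{\alpha})$ you invoke a direct Beta-function evaluation of $N$, while the paper instead substitutes the already-proved entropy value $A/B=\tfrac{1}{\alpha}+\ln_{\alpha\kappa/(1+\kappa)}Z_\kappa$ from Theorem~\ref{lem_unique}; the two routes are equivalent because that theorem is exactly the statement $N=Z_\kappa^{-(p-1)}/(1+\kappa)$.

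One point to tighten: the identity you quote as $\int f^{p}\,dF\big/\int f\,dF=1/(1+\kappa)$ is literally false for the \emph{normalized} density (it would force $N=1/(1+\kappa)$, independent of $Z_\kappa$). The correct statement is for the unnormalized kernel $g(x)=(1+\kappa x^{\alpha}/\sigma^{\alpha})^{-(1+\kappa)/(\alpha\kappa)}$, namely $\int g^{p}\,dx\big/\int g\,dx=B\!\left(\tfrac{1}{\alpha},\tfrac{1}{\alpha\kappa}+1\right)\big/B\!\left(\tfrac{1}{\alpha},\tfrac{1}{\alpha\kappa}\right)=\tfrac{1}{1+\kappa}$, from which $N=Z_\kappa^{-p}\int g^{p}\,dx=Z_\kappa^{-(p-1)}/(1+\kappa)$ follows. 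With that correction, your identification of both multipliers and your caveats about concavity and finiteness of the escort moment match the paper's conclusions.
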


\begin{proof}
    Given the coupled entropy function and the two constraints, the Lagrangian function is:
\begin{align}
\mathcal{L} &= \int_{x\in\mathcal{X}} f/^{\left(1+\frac{\alpha\kappa}{1+\kappa}\right)} (x)
\ln_{\alpha\kappa} f(x)^{-\frac{1}{1+\kappa}} dF(x) 
+ \lambda_0 \left(1 - \int_{x\in\mathcal{X}} f(x)  dF(x) \right) \\
&\ \ \ + \lambda_1 \left( \sigma^\alpha -
\int_{x\in\mathcal{X}} x^\alpha f/^{\left(1+\frac{\alpha\kappa}{1+\kappa}\right)} (x)dF(x)
\right),
\end{align}
where $\lambda_0$ and $\lambda_1$ are the Lagrangian multiples for the normalization and independent equals alpha moment, respectively. For maximization, the derivative must be zero, \(\delta \mathcal{L} / \delta f = 0\). The derivative is in terms of a particular value $y=x'$ such that the integrals only have a non-zero derivative at $y$:
\begin{enumerate}
\item Entropy Derivative
            \begin{align}
            &\text{Separate the integrals within the coupled entropy:} \nonumber \\
            H_\kappa(\textbf{X})=\frac{A}{B}; 
            A &= \int_{x\in\mathcal{X}} f(x)
            ^{1+\frac{\alpha\kappa}{1+\kappa}}
\ln_{\alpha\kappa} f(x)^{-\frac{1}{1+\kappa}} dF(x); \ 
            B = \int_{x\in\mathcal{X}} f(x)
            ^{1+\frac{\alpha\kappa}{1+\kappa}} dF(x) \nonumber \\
                \frac{\delta H_\kappa(f(y))}{\delta f(y)}
                &= \frac{\frac{1+(1+\alpha)\kappa}{1+\kappa}}{B}
                f(y)^\frac{\alpha\kappa}{1+\kappa} 
                \ln_{\alpha\kappa} f(y)^{-\frac{1}{1+\kappa}}
                -\frac{1}{B(1+\kappa)} \nonumber \\
    &- \frac{1+(1+\alpha)\kappa}{1+\kappa}
    f(y)^\frac{\alpha\kappa}{1+\kappa} \frac{A}{B^2}.
            \end{align} 
\item The normalization derivative is $-\lambda_0$.
\item Independent equals derivative
            \begin{align}
                &\frac{\delta}{\delta f(y)}  \lambda_1 \left( \sigma^\alpha 
                - \int_{x\in\mathcal{X}} x^\alpha
                f/^{\left(1+\frac{\alpha\kappa}{1+\kappa}\right)} 
                (x)dF(x) \right) \nonumber \\
                &= -\lambda_1 y^\alpha \frac{1+(1+\alpha)\kappa}{1+\kappa}
                \frac{f(y)^\frac{\alpha\kappa}{1+\kappa}}{B} 
                \nonumber \\
                &+\lambda_1 \frac{1+(1+\alpha)\kappa}{1+\kappa}
                f(y)^\frac{\alpha\kappa}{1+\kappa}
                \frac{\int_{x\in\mathcal{X}} x^\alpha
                f(x)^{1+\frac{\alpha\kappa}{1+\kappa}}dF(x)}
                {B^2} \nonumber \\
                &= -\lambda_1\frac{1+(1+\alpha)\kappa}{(1+\kappa)B}
                f(y)^\frac{\alpha\kappa}{1+\kappa}
                \left( y^\alpha - \sigma^\alpha\right)
            \end{align}
\end{enumerate}
Combining terms, the Lagrangian derivative is:
\begin{align}\label{equ_Lder}
    \delta \frac{\mathcal{L}}{\delta f} 
    &= \frac{1+(1+\alpha)\kappa}{(1+\kappa)B}
    f(y)^\frac{\alpha\kappa}{1+\kappa}
    \left(\ln_{\alpha\kappa} f(y)^{-\frac{1}{1+\kappa}}
    -\frac{A}{B}-\lambda_1 (y^\alpha - \sigma^\alpha)\right) \nonumber \\
     &-\frac{1}{B(1+\kappa)} - \lambda_0 \nonumber \\
    &=0.
\end{align}
Multiplying by $-(1+\kappa)\alpha\kappa B$ and solving for $f(y)$ gives:
\begin{align}
    &(1+(1+\alpha)\kappa)
    \left(1 + \frac{A}{B}\alpha\kappa
    +\lambda_1 \alpha\kappa(y^\alpha - \sigma^\alpha) \right)
    f(y)^\frac{\alpha\kappa}{1+\kappa} \nonumber \\
    &= (1+(1+\alpha)\kappa)
     -\alpha\kappa - \lambda_0 (1+\kappa)\alpha\kappa B \nonumber \\
     f(y)^{-\frac{\alpha\kappa}{1+\kappa}} 
     &= \frac{(1+(1+\alpha)\kappa)
    \left(1 + \frac{A}{B}\alpha\kappa
    +\lambda_1 \alpha\kappa(y^\alpha - \sigma^\alpha) \right)}
     {(1+(1+\alpha)\kappa)
     -\alpha\kappa - \lambda_0 (1+\kappa)\alpha\kappa B} \nonumber \\
     &= \frac{(1+(1+\alpha)\kappa)
     \left(1 + \frac{A}{B}\alpha\kappa
    -\lambda_1 \alpha\kappa \sigma^\alpha\right)
    \left(1 + \frac{\lambda_1 \alpha\kappa y^\alpha}
    {1 + \frac{A}{B}\alpha\kappa 
    -\lambda_1 \alpha\kappa \sigma^\alpha} 
    \right)}
     {(1+(1+\alpha)\kappa)
     -\alpha\kappa - \lambda_0 (1+\kappa)\alpha\kappa B}
\end{align}
The right-hand side has the form $a(1+b y^\alpha),$ confirming that the coupled exponential distribution is the maximizing distribution. The Langragian multipliers are determined from the constraints, which specify that $a=Z^\frac{\alpha\kappa}{1+\kappa}$ and $b=\frac{\kappa}{\alpha\sigma^\alpha}.$ Given $ H_\kappa(f(y)) = \frac{A}{B} = \frac{1}{\alpha}+\ln_\frac{\alpha\kappa}{1+\kappa}Z_\kappa$ from Lemma \eqref{lem_unique} 
the moment constraint multiple is determined from the equation:
\begin{align*}
    \frac{1}{\sigma^\alpha} 
    &= \frac{\lambda_1 \alpha}
    {1 + \left(\frac{1}{\alpha}+\ln_\frac{\alpha\kappa}{1+\kappa}Z_\kappa\right)\alpha\kappa 
    -\lambda_1 \alpha\kappa \sigma^\alpha} \\
    &= \frac{\lambda_1 \alpha}
    {(1+\kappa) Z_\kappa^\frac{\alpha\kappa}{1+\kappa} 
    -\lambda_1 \alpha\kappa \sigma^\alpha} \\
     \frac{(1+\kappa) Z_\kappa^\frac{\alpha\kappa}{1+\kappa} }{\sigma^\alpha} 
     &= \lambda_1 \alpha(1 + \kappa ) \\
     \lambda_1 &= \frac{
     Z_\kappa^\frac{\alpha\kappa}{1+\kappa} }
     {\alpha \sigma^\alpha}.
\end{align*}
The normalization constraint is determined by the expression for the normalization:
\begin{align*}
    Z_\kappa^{-\frac{\alpha\kappa}{1+\kappa}} &= \frac{(1+(1+\alpha)\kappa)
     -\alpha\kappa - \lambda_0 (1+\kappa)\alpha\kappa B}
     {(1+(1+\alpha)\kappa)
     \left((1+\kappa) Z_\kappa^\frac{\alpha\kappa}{1+\kappa} 
    -Z_\kappa^\frac{\alpha\kappa}{1+\kappa} \kappa \right)} \\  
    \lambda_0 &=  
    - \frac{1} {(1+\kappa) 
    \int_{x\in\mathcal{X}} 
    f(x)^{1+\frac{\alpha\kappa}{1+\kappa}} dF(x)}
\end{align*}
Thus, completing the proof.
\end{proof}
\begin{remark}
    The Lagrangian multiplier, $\lambda_1,$ is the natural parameter of information geometry, dual to the moment parameter  $\sigma^\alpha.$ The presence in the Lagrangian multipliers of terms involving the integral of the distribution, $B=\int_\mathbf{X}f(x)^q dF(\mathbf{x}),$ and $Z_\kappa^\frac{\alpha\kappa}{1+\kappa}$ is related to the nonlinear deformation intrinsic to the coupled exponential family. As noted in the previous section, use of the partition function to normalize the distribution, and the coupled sum to bring this factor within the coupled exponential function, contrasts with the definitions used by Amari to define the $q-$exponential family. Thus, a complete study of the information geometry of the coupled exponential family is merited.
\end{remark}
\begin{remark}
    In the continuous case with $\alpha=1,$ the normalization of the coupled exponential distribution is $Z_\kappa = \sigma.$ In this case, the Lagrangian multipliers are $\lambda_0 = \sigma^\frac{\kappa}{1+\kappa}$ and $\lambda_1 = \sigma^{\frac{\kappa}{1+\kappa}-1} = \sigma^\frac{-1}{1+\kappa}.$ 
\end{remark}
\section{Conclusion}\label{sec_concl}
In this paper, I have established a set of proofs regarding the coupled entropy as the unique function that satisfies the requirements for an infodynamic measure for complex systems. This result advances the entropic modeling of complex infodynamics initiated by Rényi, who originated the use of the generalized mean, and Tsallis, who showed that the constraints for entropy maximization should raise the probability distribution to a power. Two core proofs show that the Tsallis $q-$statistics utilized a flawed definition for scale, which is corrected by the coupled entropy:
\begin{enumerate}
    \item The information scale is defined by the derivative of the log-log distribution equaling $-1$, and is the only scale that is independent of the tail shape and nonlinear source of uncertainty.
    \item The coupled entropy of its maximizing coupled exponential family distribution is unique in its isolation of the nonlinear statistical coupling to a generalized logarithm of the partition function. Further, this solution is equal to the argument of the coupled stretched exponential distribution at the location plus the scale.
\end{enumerate}

    Significantly, a generalization of thermodynamics for complex systems can now be grounded in a consistent set of relationships. A novel result reported here is two complementary but internally consistent generalizations of the foundational thermodynamic relationships.  The first and likely the preferred model has a generalized temperature equal to the scale constraint, $T_\kappa^C=\sigma.$ This approach requires that the free energy combine the internal energy and the entropy via the coupled subtraction, 
\begin{equation}
    F_\kappa^C \equiv U_\kappa \ominus_\frac{\kappa}{T_\kappa^C} T_\kappa S_\kappa = \ominus_\frac{\kappa}{T_\kappa^C} T_\kappa^C \ln_\kappa Z^\frac{1}{1+\kappa}.
\end{equation} 
The second model has a generalized temperature equal to the inverse of the Lagrange multiplier for the moment constraint, $T_\kappa^M =\sigma^\frac{1}{1+\kappa}.$ In this case, the free energy is generalized using the standard subtraction function:
\begin{equation}
    F_\kappa^M \equiv U_\kappa - T_\kappa S_\kappa = - T_\kappa^C \ln_\kappa Z^\frac{1}{1+\kappa}.
\end{equation} 
These two models show that the effects of nonlinearity must be included within the structure of the generalized thermodynamics, though there is a choice whether that is done within the measure of temperature or free energy. An important research question is whether these two models are related to the the dual coordinates of information geometry. 

Another important research investigation to complete is a thorough evaluation of the stability of the coupled entropy. Because there are open questions regarding the best way to pursue such an investigation, this is best addressed separately from this proof of the coupled entropy's uniqueness. For $\alpha=1$ and $d=1$, the finite coupled entropy of the coupled exponential distribution even as $\kappa \rightarrow \infty$ is strong evidence of its stability. As $\alpha$ and $d$ are varied stability requirements may impose restrictions on the domain of $\kappa$; however, it's possible that the use of $\gamma=\frac{\alpha}{d}$ would alleviate these restrictions. 

The initial performance of the coupled VAE \cite{nelson_variational_2025}, which achieves an original result enabling the modeling of extreme heavy-tailed distributions, is suggestive that applications of the coupled entropy are wide and significant. The design of systems that can respond adaptively to the complexities of a natural environment is fundamental to modern science and engineering. The heavy-tailed interference of dense, adaptive communications networks is indicative of these challenges. AI developers aspire to mimic and integrate with biological intelligence \cite{ikle_artificial_2025}; city planners seek ways of assuring that human habitats can flourish from and contribute back to natural habitats \cite{white_modeling_2015}; and modern finance requires management of complex risks impacted by political and environmental instabilities \cite{glenn_global_2025}. Each of these domains requires detailed modeling of the nonlinearities that create fluctuating noise patterns. The coupled entropy will be a crucial tool in the development of precise analytical tools for the design and analysis of complex systems.

\section*{Acknowledgements}

I wish to thank colleagues Amenah Al-Najafi, Igor Oliveira, William Thistleton, and Calden Wloka, whose co-development of the coupled variational inference methods clarified the unique importance of the coupled entropy. Special thanks go to Ugur Tirnakli and Bruce Boghosian for the invitation to the 2025 Nonextensive Statistical Physics Workshop to present the research on the coupled entropy.

\section*{Declarations}

There are no conflicts of interest or funding sources to declare.

DeepSeek and Mathematica were used as analytical aides in the research and development process; however, the manuscript was written manually, including the proofs and equations. Wikipedia was used for background information, though references refer to primary sources.

\appendix

\section{Generalized entropy functions}\label{app_Entropies}

This appendix will show the various forms in which the generalized entropies can be expressed. Of particular interest is deriving the form in which a generalized mean aggregates the probabilities and then is transformed into an entropy measure via a generalized logarithm. This will assist in explaining the relationship between the entropies. The expressions will utilize the relationship between $q$ and the coupled algebra via $q=1+\frac{\alpha\kappa}{1+d\kappa}=\frac{1+(d+\alpha)\kappa}{1+d\kappa}.$ Within the integrals, $f(\mathbf{x})$ or 
\begin{align}
    f/^q(\mathbf{x})
    =\frac{f(\mathbf{x})^q}
    {\sum_{\mathbf{x\in\mathcal{X}}} f(\mathbf{x})^{\left(q\right)}}
\end{align} 
will be separated out to clarify the difference between the probability serving as a weight and the function of the probabilities being averaged. The generalized mean associated with each entropy can be shown by applying the inverse of its respective generalized logarithm. 

\textbf{Rényi Entropy}
\begin{align}
    H_q^R(\mathbf{X}) &= -\ln\left(\int_{\mathbf{x\in\mathcal{X}}} f(\mathbf{x})f(\mathbf{x})^{q-1}dF(\mathbf{x})\right )^\frac{1}{q-1} \\
&=-\ln\left(\int_{\mathbf{x\in\mathcal{X}}}f(\mathbf{x})f(\mathbf{x})^
\frac{\alpha\kappa}{1+d\kappa}dF(\mathbf{x})\right)^
{\frac{1+d\kappa}{\alpha\kappa}}
\end{align}

\textbf{Tsallis Entropy}
\begin{align} \label{equ_TEnt}
     H_q^T(\mathbf{X}) &=\frac{1}{1-q}
     \left(\int_{
     \mathbf{x\in\mathcal{X}}}
     f(\mathbf{x})f(\mathbf{x})^{q-1}dF(\mathbf{x)}
     -1\right) \\
     H_\kappa^{T}(\mathbf{X};\alpha,d)
     &= {-\frac{1+d\kappa}{\alpha\kappa}}
     \left(\int_{\mathbf{x\in\mathcal{X}}}
     f(\mathbf{x})f(\mathbf{x})^
     \frac{\alpha\kappa}{1+d\kappa}
     dF(\mathbf{x})
     -1\right) \\
     &=- \int_{\mathbf{x\in\mathcal{X}}} f(\mathbf{x})
     \ln_\frac{\alpha\kappa}{1+d\kappa} f(\mathbf{x})
     dF(\mathbf{x})
\end{align}

\begin{align}
    \exp_\frac{\alpha\kappa}{1+d\kappa}(- H_\kappa^{T}(\mathbf{X})
    &= \exp_\frac{\alpha\kappa}{1+d\kappa}\left( \int_{\mathbf{x\in\mathcal{X}}} f(\mathbf{x})
     \ln_\frac{\alpha\kappa}{1+d\kappa} f(\mathbf{x})
     dF(\mathbf{x})
     \right) \\
     &= \left(1+\int_{\mathbf{x\in\mathcal{X}}} 
     \left(f(\mathbf{x})f(\mathbf{x})^\frac{\alpha\kappa}{1+d\kappa}
     -f(\mathbf{x})\right)dF(\mathbf{x})\right)^\frac{1+d\kappa}{\alpha\kappa} \\
     &= \left(\sum_{\mathbf{x\in\mathcal{X}}} 
     f(\mathbf{x})f(\mathbf{x})^\frac{\alpha\kappa}{1+d\kappa}
     dF(\mathbf{x})\right)^\frac{1+d\kappa}{\alpha\kappa}
\end{align}

\textbf{Normalized Tsallis Entropy}
\begin{align}
     H_q^{NT}(\mathbf{X}) &=\frac{1}{1-q}
     \left(1-
     \frac{1}{\int_{\mathbf{x\in\mathcal{X}}}
     f(\mathbf{x})^q \ dF(\mathbf{x})}\right) \\
     &=\frac{1}{1-q}\int_{\mathbf{x}\in\mathcal{X}}
     f/^q(\mathbf{x})\left(1-f(\mathbf{x})^{-q}\right)dF(\mathbf{x}) \\
     H_\kappa^{NT}(\mathbf{X};\alpha,d)
     &={\frac{1+d\kappa}{\alpha\kappa}}
     \int_{\mathbf{x\in\mathcal{X}}}
     f/^\frac{1+(d+\alpha)\kappa}{1+d\kappa}(\mathbf{x})
     \left(f(\mathbf{x})^{-\frac{1+(d+\alpha)\kappa}{1+d\kappa}}-1\right)
     dF(\mathbf{x})\\
     &=\frac{1+(d+\alpha)\kappa}{\alpha\kappa}
     \int_{\mathbf{x\in\mathcal{X}}}
     f/^\frac{1+(d+\alpha)\kappa}{1+d\kappa}(\mathbf{x})
     \ln_\frac{1+(d+\alpha)\kappa}{1+d\kappa} f(\mathbf{x})^{-1}
     dF(\mathbf{x})\\
     &=\int_{\mathbf{x\in\mathcal{X}}}
     f/^\frac{1+(d+\alpha)\kappa}{1+d\kappa}(\mathbf{x})
     \ln_\frac{\alpha\kappa}{1+d\kappa} 
     \left(f(\mathbf{x})^
     {-\frac{1+(d+\alpha)\kappa}{\alpha\kappa}}\right)
     dF(\mathbf{x})
\end{align}

\begin{align}
      &\left(\exp_\frac{\alpha\kappa}{1+d\kappa} 
      \left(H_\kappa^{NT}(\mathbf{X})\right)\right)^
      {-\frac{\alpha\kappa}{1+(d+\alpha)\kappa}} \\
      &= \left(\exp_\frac{\alpha\kappa}{1+d\kappa} 
      \left(
      \int_{\mathbf{x\in\mathcal{X}}}
     f/^\frac{1+(d+\alpha)\kappa}{1+d\kappa}(\mathbf{x})
     \ln_\frac{\alpha\kappa}{1+d\kappa} f(\mathbf{x})^
     {-\frac{1+(d+\alpha)\kappa}{\alpha\kappa}} 
      dF(\mathbf{x})\right)\right)^
      {-\frac{\alpha\kappa}{1+(d+\alpha)\kappa}} \\
      &= \left(1+
      \int_{\mathbf{x\in\mathcal{X}}}
     \left(f/^\frac{1+(d+\alpha)\kappa}{1+d\kappa}(\mathbf{x})
      f(\mathbf{x})^
     {-\frac{1+(d+\alpha)\kappa}{1+d\kappa}} 
     - f/^\frac{1+(d+\alpha)\kappa}{1+d\kappa}(\mathbf{x}) \right)
      dF(\mathbf{x})\right)^
      {-\frac{1+d\kappa}{1+(d+\alpha)\kappa}} \\
      &= \left(
      \int_{\mathbf{x\in\mathcal{X}}}
      f/^\frac{1+(d+\alpha)\kappa}{1+d\kappa}(\mathbf{x})
      f(\mathbf{x})^{-\frac{1+(d+\alpha)\kappa}{1+d\kappa}} 
      dF(\mathbf{x})\right)^
      {-\frac{1+d\kappa}{1+(d+\alpha)\kappa}} 
\end{align}
\newpage
\textbf{Coupled Entropy}
\begin{align}
    H_\kappa(\mathbf{X};\alpha,d,\gamma=1) &=\frac{1}{\alpha\kappa}
    \int_{\mathbf{x\in\mathcal{X}}} 
    f/^\frac{1+(d+\alpha)\kappa}{1+d\kappa}(\mathbf{x})
    \left(f(\mathbf{x})^{-\frac{\alpha\kappa}{1+d\kappa}}
    -1\right)dF(\mathbf{x}) \\
    &=\int_{\mathbf{x\in\mathcal{X}}} 
    f/^\frac{1+(d+\alpha)\kappa}{1+d\kappa}(\mathbf{x})
    \ln_{\alpha\kappa} f(\mathbf{x})^{-\frac{1}{1+d\kappa}}
    dF(\mathbf{x})
\end{align}
\begin{align}
    &\exp_{\alpha\kappa}^{-(1+d\kappa)} [H_\kappa(\mathbf{X})]  \\
    &=\exp_{\alpha\kappa}^
    {-(1+d\kappa)}\left(
    \int_{\mathbf{x\in\mathcal{X}}} 
    f/^\frac{1+(d+\alpha)\kappa}{1+d\kappa}(\mathbf{x})
    \ln_{\alpha\kappa} f(\mathbf{x})^{-\frac{1}{1+d\kappa}}
    dF(\mathbf{x})\right) \\
    &= \left(1+\frac{\alpha\kappa}{\alpha\kappa}\int_{\mathbf{x\in\mathcal{X}}}
    \left(f/^\frac{1+(d+\alpha)\kappa}{1+d\kappa}(\mathbf{x})
    f(\mathbf{x})^{-\frac{\alpha\kappa}{1+d\kappa}}
    -f/^\frac{1+(d+\alpha)\kappa}{1+d\kappa}(\mathbf{x})\right)
    dF(\mathbf{x})\right)^{-\frac{1+d\kappa}{\alpha\kappa}} \\
    &= \left(\int_{\mathbf{x\in\mathcal{X}}}
    f/^\frac{1+(d+\alpha)\kappa}{1+d\kappa}(\mathbf{x})
    f(\mathbf{x})^{-\frac{\alpha\kappa}{1+d\kappa}}
    dF(\mathbf{x})\right)^{-\frac{1+d\kappa}{\alpha\kappa}}
\end{align}
In Section \ref{subsec_required}, the entropies for the centered coupled exponential distribution $(\alpha=1,d=1,\mu=0)$ are mapped onto the distribution, which provides a visual comparison of the generalized entropies. The density value equals $\exp_\kappa^{-(1+\kappa)}(H).$ The variable is then determined by solving for $x,$ given a coupled exponential, 
\begin{align}
    x&=\sigma (H\ominus_\kappa \ln_\kappa \sigma^\frac{1}{1+\kappa})
    =\sigma\frac{H-\ln_\kappa \sigma^\frac{1}{1+\kappa}}{1+\kappa 
    \ln_\kappa \sigma^\frac{1}{1+\kappa}}.
\end{align} 

\section{Hanel-Thurner classification of the coupled entropy} \label{app_Hanel}

Hanel and Thurner \cite{hanel_comprehensive_2011, hanel_how_2014, amigo_brief_2018, korbel_classification_2018} derived a classification of generalized entropies for complex systems based on the scaling of the microstates, W. The classification is based on a scaling by $\lambda W,$ which has a limit of $W^{1-c},$ and a scaling by $W^{1+a},$ which has a scaling of $(1+a)^d.$ $c$ is associated with deformation of the exponential and is equal to $0<q<1$ for the Tsallis entropy. $d$ is a secondary scaling that describes the rate of convergence to the power-law. In this classification, generalized entropies are characterized by their trace and non-trace components, with $p_i=W^{-1}:$
\begin{align}
    F_{G,g}(\mathbf{p})=G\left(\sum_i^W g(p_i)\right).
\end{align}

For equiprobable states, the independent equals distribution is equal to the original distribution. Using the radial variable r and considering $d-$dimensions, the functions for  the coupled entropy are $g(r)=r \ln_{\alpha\kappa} r^{-\frac{1}{1+d\kappa}}$$=\frac{1}{\alpha\kappa}\left(r^{1-\frac{\alpha\kappa}{1+d\kappa}}-r\right),$ and $G(u)=u^\frac{1}{\gamma}.$ The scaling is approximated by
\begin{align}
    F_{G,g}(\mathbf{p}_{\lambda W}) &\approx G\left(\lambda W 
    g\left(\frac{1}{\lambda W}\right)\right) \\
    F_{G,g}(\mathbf{p}_{W^{1+a}}) &\approx G\left(W^{1+a}g\left(\frac{1}{W^{1+a}}\right)\right).
\end{align}
The $(c,d)$ classification of an entropy function is determined by the limit of the ratios
\begin{align}
    \lim_{W\rightarrow\infty}\left[
    \frac{G\left(\lambda W 
    g\left(\frac{1}{\lambda W}\right)\right)}{G\left(W 
    g\left(\frac{1}{W}\right)\right)}\right] 
    = \lambda^{1-c} \\
    \lim_{W\rightarrow\infty}\left[
    \frac{G\left(W^{1+a} 
    g\left(\frac{1}{W^{1+a}}\right)\right)}{G\left(W 
    g\left(\frac{1}{W}\right)\right)}
    W^{(c-1)a}\right] 
    = (1+a)^d \\
\end{align}
The scaling properties of the coupled entropy are
\begin{align}
    \lim_{W\rightarrow\infty}\left[
    \frac{\ln_{\alpha\kappa} (\lambda W)
    ^{\frac{1}{1+d\kappa}}}
    { \ln_{\alpha\kappa} W^{\frac{1}{1+d\kappa}}}\right] 
    &= \lambda^\frac{\alpha\kappa/\gamma}{1+d\kappa} \\
    \lim_{W\rightarrow\infty}\left[
    \frac{\ln_{\alpha\kappa} W^{\frac{1+a}{1+d\kappa}}}
    {\ln_{\alpha\kappa} W^{\frac{1}{1+d\kappa}}}
    W^{-\frac{\alpha\kappa/\gamma}{1+d\kappa}a}\right] 
    &= 1
\end{align}
Therefore, $c=1-\frac{\alpha\kappa/\gamma}{1+d\kappa},$ which for $\gamma=1$ is $c=2-q$ and thus the same classification as the Tsallis $2-q$  and Rényi entropies. Likewise, $d=0,$ which is also the scaling of the Tsallis $2-q$ entropy.

There are additional scaling properties that could further characterize the coupled entropy.

\section{Mathematica Github Repository} \label{app_github}

A Github repository regarding Nonlinear Statistical Coupling methods is maintained at \cite{nelson_nonlinear_2025}. Within the repository is a folder "nsc-mathematica". The file "Coupled Entropy NSP 2025.nb" and its pdf copy "Coupled Entropy NSP 2025Nov20.pdf" contain the computations for the graphics in this paper. The file calls functions from "Coupled Functions.nb", which has a pdf copy "Coupled Functions 2025Nov20.pdf". Some additional computations are completed in "Compare Uncertainty Functions v4.nb", with pdf copy "Compare Uncertainty Functions v4 2025Nov20.pdf". 

\bibliography{MICS_BibTex}

\end{document}